\newtheorem{theorem}{Theorem}[section]
\newtheorem{prop}{Proposition}[section]
\newtheorem{example}{Example}[section]
\numberwithin{equation}{section}
\begin{document}
\title{The McDonald Gompertz Distribution: Properties and Applications}
\author{Rasool Roozegar\thanks{Corresponding: rroozegar@yazd.ac.ir },  Saeid Tahmasebi$^\dag$, and Ali Akbar Jafari$^*$\\
{\small $^*$Department of Statistics, Yazd University, Yazd,  Iran}\\
{\small $^\dag$Department of Statistics, Persian Gulf University, Bushehr, Iran}}
\date{}
\maketitle \thispagestyle{empty}

\begin{abstract}
This paper introduces a five-parameter lifetime model with increasing, decreasing, upside -down bathtub and bathtub shaped failure rate called as the McDonald Gompertz (McG) distribution. This new distribution extend the Gompertz, generalized Gompertz, generalized exponential, beta Gompertz and Kumaraswamy Gompertz distributions, among several other models.
We obtain several properties of the McG distribution including moments, entropies, quantile and generating functions. We provide the density function of the order statistics and their moments. The parameter estimation is based on the usual maximum likelihood approach. We also provide the observed information matrix and discuss inferences issues. In the end, the flexibility and usefulness of the new distribution is illustrated by means of  application to two real data sets.
\end{abstract}
{\it Keywords}: Gompertz distribution; McDonald distribution; Maximum likelihood estimation; Kumaraswamy distribution; Moment generating function; Entropy.
\newline {\it 2010 AMS Subject Classification:} 62E15, 60E05, 62F10.

\section{Introduction}
The Gompertz (G) distribution, generalizing exponential (E) distribution, is a popular distribution that has been commonly used in many applied problems for modeling data in biology
\cite{economos-82},
gerontology
\cite{br-fo-74},
engineering and marketing studies
\cite{be-gl-12}.
A significant progress has been made towards the generalization and construction flexible distributions to facilitate better modeling of well-known lifetime data. The book by
\cite{jo-ko-ba-95-2}
provides some applications of the G distribution.

In recent years, many authors have proposed distributions which can arise as special submodels within the McDonald generated or generalized beta generated (GBG) class of distributions.
\cite{al-co-or-sa-12}
introduce a class of generalized beta-generated distributions that have three shape parameters in the generator. They considered eleven different parents: normal, log-normal, skewed student-‎\textit{t}‎, Laplace, exponential, Weibull, Gumbel, Birnbaum-Saunders, gamma, Pareto and logistic distributions. Other generalizations are McDonald gamma distribution by
\cite{ma-na-sa-co-12},
McDonald inverted beta distribution by
\cite{co-le-12},
McDonald normal distribution by
\cite{co-ci-re-or-12},
McDonald extended exponential distribution by
\cite{co-ha-or-pa-12},
McDonald half-logistic distribution by
\cite{ol-sa-xa-tr-co-13},
McDonald Dagum by
\cite{ol-ra-13},
McDonald generalized beta-binomial   distribution by
\cite{ma-wi-ya-13},
McDonald log-logistic distribution by
\cite{ta-ma-zu-ha-14},
McDonald arcsine distribution by
\cite{co-le-14},
and McDonald Weibull  distribution by
\cite{co-ha-or-14}.
One of the advantages of the McDonald generated distribution lies in its ability of fitting skewed data such as other well-known distributions in the literature
\cite{mu-na-02,mu-wa-07}.

In this paper, we introduce a new five-parameter model called the ‎\textit{McDonald Gompertz}‎ (McG) distribution that includes as special sub-models some recent distributions in the literature. This distribution offers a more flexible distribution for modeling lifetime data in terms of its hazard rate shapes that are decreasing, increasing, upside-down bathtub and bathtub shaped. Several mathematical properties of this new model in order to attract wider applications in reliability, engineering and in other areas of research are provided.

This paper is organized as follows. In Section \ref{sec.model}, we introduce the McG distribution, density and hazard functions. Some special models of the new distribution are described in this section. In Section \ref{sec.pro}, we present useful expansions and properties of the cumulative distribution function (cdf), probability density function (pdf), ‎\textit{k}‎th moment and moment generating function of the McG distribution. Moreover, order statistics and their moments, entropy and quantile measures are provided in this section. Estimation of the McG parameters by maximum likelihood (ML) method is described in Section \ref{sec.est}. Finally, application of the McG model using two real data sets are considered in Section \ref{sec.exa}.

\section{The McG model}
\label{sec.model}
The generalized beta distribution of the first kind (or beta type I) or McDonald distribution was introduced by
\cite{mcdonald-84}.
The cdf of the McDonald  distribution is given by
\begin{equation*}
F(x)=‎‎I(x^c; a/c, b), \ \ 0<x<1, \ \ a, b, c>0,
\end{equation*}
where $I(y; a, b)=\frac{B_{y}(a, b)}{B(a,b)}
=‎\frac{1}{B(a,b)} ‎\int_{0}^{y} w^{a-1} (1-w)^{b-1} dw$ is the incomplete beta function ratio of type I and $B(a,b)=\int_{0}^{1} w^{a-1} (1-w)^{b-1} dw$ is the beta function.


The cdf of McG model can be defined by
\begin{equation}\label{eq.FMCG}
F(y; a, b, c, \theta, \gamma)=‎‎I([1-\exp (‎-\frac{\theta}{\gamma}‎(e^{\gamma y}-1))]^c; a/c, b),  \ \ y>0,
\end{equation}
where $\theta, \gamma>0 $. The pdf corresponding to \eqref{eq.FMCG} is given by
\begin{eqnarray}\label{eq.fMCG}
f(y; a, b, c, \theta, \gamma)&=&\frac{c \theta e^{\gamma y} }{B(a/c,b)}‎‎‎
\exp (‎-\frac{\theta}{\gamma}‎(e^{\gamma y}-1))[1-\exp (‎-\frac{\theta}{\gamma}‎(e^{\gamma y}-1))]^{a-1} \nonumber\\
&&\times\{1- [1-\exp (‎-\frac{\theta}{\gamma}‎(e^{\gamma y}-1))]^c \}^{b-1}.
\end{eqnarray}
Here after, we denote a random variable $Y$ with pdf in \eqref{eq.fMCG} by $McG (a, b, c, \theta, \gamma)‎$. Indeed, the McG distribution
belongs to McDonald-generalized  class of distributions with cdf and pdf as
\begin{equation*}
F(y)=‎‎I(G^c(y); a/c, b)= ‎\frac{1}{B(a/c, b)} ‎\int_{0}^{G^c(y)} w^{a/c-1} (1-w)^{b-1} dw,
\end{equation*}
and
\begin{equation*}
f(y)=\frac{c}{B(a/c, b)} g(y) G^{a-1}(y) (1-G^c(y))^{b-1},
\end{equation*}
respectively. The cdf in \eqref{eq.FMCG} can be expressed in terms of the hypergeometric function as
\begin{equation*}
F(y; a, b, c, \theta, \gamma)=‎\frac{c G^a(y)}{a B(a/c, b)}‎ _{2}F_{1} (a/c, 1-b; a/c+1, G^c(y) ),
\end{equation*}
where
$_{2}F_{1} (a, b; c, x)=\sum_{n=0}^{\infty} \frac{(a)_{n} (b)_{n}}{(c)_{n}} ‎\frac{x^n}{n!}‎$
is ascending factorial, and $G(y)= 1-\exp (‎-\frac{\theta}{\gamma}‎(e^{\gamma y}-1))$ is the cdf of G distribution.

\begin{theorem}
Let $f(y; a, b, c, \theta, \gamma)$ be the pdf of McG distribution given by \eqref{eq.fMCG}. The limiting behavior of $f(y; a, b, c, \theta, \gamma)$ for different values of its parameters is given below:

\noindent $i$. If $a=1$, then ${\mathop{\lim }_{y \rightarrow 0^+} f (y; a, b, c, \theta, \gamma)}=\frac{  \theta c}{B(1/c,b)}.$

\noindent $ii$. If $a>1$, then ${\mathop{\lim }_{y \rightarrow 0^+} f (y; a, b, c, \theta, \gamma)}=0.$

\noindent $iii$. If $a<1$, then ${\mathop{\lim }_{y \rightarrow 0^+} f (y; a, b, c, \theta, \gamma)}=\infty.$

\noindent $iv$. ${\mathop{\lim }_{y \rightarrow + \infty} f (y; a, b, c, \theta, \gamma)}=0.$
\end{theorem}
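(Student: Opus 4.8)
The plan is to exploit the factored form of the density. Writing $G(y)=1-\exp(-\frac{\theta}{\gamma}(e^{\gamma y}-1))$ for the Gompertz cdf and $g(y)=\theta e^{\gamma y}\exp(-\frac{\theta}{\gamma}(e^{\gamma y}-1))$ for its pdf, the density \eqref{eq.fMCG} reads
\[
f(y;a,b,c,\theta,\gamma)=\frac{c}{B(a/c,b)}\,g(y)\,G^{a-1}(y)\,\bigl(1-G^c(y)\bigr)^{b-1}.
\]
Since the normalizing constant $c/B(a/c,b)$ is fixed, the three regimes in $i$--$iii$ and the single statement $iv$ reduce to tracking the limits of the three $y$-dependent factors separately and multiplying them.

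For parts $i$--$iii$ I would send $y\to 0^+$. Here $e^{\gamma y}\to 1$, so $\exp(-\frac{\theta}{\gamma}(e^{\gamma y}-1))\to 1$, giving $g(y)\to\theta$ and $G(y)\to 0^+$; consequently $1-G^c(y)\to 1$ and the last factor tends to $1$. All the interesting behavior is therefore carried by $G^{a-1}(y)$. Because $G(y)\to 0^+$, we obtain $G^{a-1}(y)\to 1$ when $a=1$, $G^{a-1}(y)\to 0$ when $a>1$, and $G^{a-1}(y)\to\infty$ when $a<1$. Multiplying by the constant $c\theta/B(a/c,b)$ and noting that $B(a/c,b)=B(1/c,b)$ at $a=1$ yields the three asserted values.

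Part $iv$, the $y\to\infty$ limit, is the one requiring care, because it is genuinely indeterminate when $b<1$: there $g(y)\to 0$ while $(1-G^c(y))^{b-1}\to\infty$. The plan is to resolve this with a small-tail asymptotic. Put $s=s(y)=\exp(-\frac{\theta}{\gamma}(e^{\gamma y}-1))$, so that $s\to 0^+$ and $G=1-s$. A binomial expansion gives $1-G^c=1-(1-s)^c\sim c\,s$, whence $(1-G^c)^{b-1}\sim (cs)^{b-1}$, while $G^{a-1}\to 1$ and $g=\theta e^{\gamma y}s$. Collecting the powers of $s$ leaves
\[
f\sim \mathrm{const}\cdot e^{\gamma y}\,s^{b}=\mathrm{const}\cdot e^{\gamma y}\exp\!\Bigl(-\tfrac{b\theta}{\gamma}(e^{\gamma y}-1)\Bigr).
\]

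The decisive step is then to observe that this is a double-exponential decay dominating a single-exponential growth: substituting $u=e^{\gamma y}\to\infty$ turns the expression into a constant multiple of $u\,e^{-(b\theta/\gamma)u}$, which tends to $0$ for every $b>0$. Hence $f\to 0$, establishing $iv$ uniformly in the parameters. I expect the only real obstacle to be making the ``$\sim$'' in the $b<1$ case rigorous — it suffices to bound $1-G^c(y)$ between two fixed positive multiples of $s$ for all large $y$ and then squeeze — after which the limit is immediate.
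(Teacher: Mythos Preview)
For parts $i$--$iii$ your argument and the paper's coincide: the paper simply declares these ``obviously proved,'' and your factorization of $f$ into $g(y)\,G^{a-1}(y)\,(1-G^c(y))^{b-1}$ makes that obviousness explicit.

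For part $iv$ the two approaches genuinely diverge. The paper argues by a crude squeeze: it asserts
\[
0\le \bigl[1-(1-e^{-\frac{\theta}{\gamma}(e^{\gamma y}-1)})^c\bigr]^{b-1}<1,
\]
drops that factor, and bounds $f$ above by $\frac{c\theta}{B(a/c,b)}\,e^{\gamma y}e^{-\frac{\theta}{\gamma}(e^{\gamma y}-1)}G^{a-1}(y)$, which it then sends to $0$. This is quick but the displayed inequality is only valid when $b\ge 1$; for $0<b<1$ the factor $(1-G^c)^{b-1}$ actually diverges as $y\to\infty$, so the paper's squeeze silently fails in that regime. Your asymptotic route --- writing $s=e^{-\frac{\theta}{\gamma}(e^{\gamma y}-1)}$, using $1-G^c\sim cs$, and collecting $f\sim \mathrm{const}\cdot e^{\gamma y}s^{b}$ --- handles every $b>0$ uniformly and is therefore the more complete argument. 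The cost is the small extra step of turning the ``$\sim$'' into a two-sided bound, which as you note is routine (for all small $s$ one has $c_1 s\le 1-(1-s)^c\le c_2 s$ with positive constants depending only on $c$).
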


\begin{proof}
The parts (i)-(iii) are obviously proved. For part (iv), we have
\begin{eqnarray*}
&&0\leq [1-[1-exp\{-\frac{\theta}{\gamma}(e^{\gamma y}-1)\}]^{c}]^{b-1}<1  \Longrightarrow \\
&&0<f(y; a, b, c, \theta, \gamma)<{c \theta e^{\gamma y} \exp(-\frac{\theta}{\gamma}(e^{\gamma y}-1))[1-\exp(-\frac{\theta}{\gamma}(e^{\gamma y}-1))]^{a-1}}/{B(a/c,b)}.
\end{eqnarray*}
It can be easily shown that
$${\mathop{\lim }_{y \rightarrow \infty} c \theta e^{\gamma y} \exp(-\frac{\theta}{\gamma}(e^{\gamma y}-1))[1-\exp(-\frac{\theta}{\gamma}(e^{\gamma y}-1))]^{a-1}}=0,$$
and the proof is completed.
\end{proof}

\begin{figure}[ht]
\centering
\includegraphics[scale=0.33]{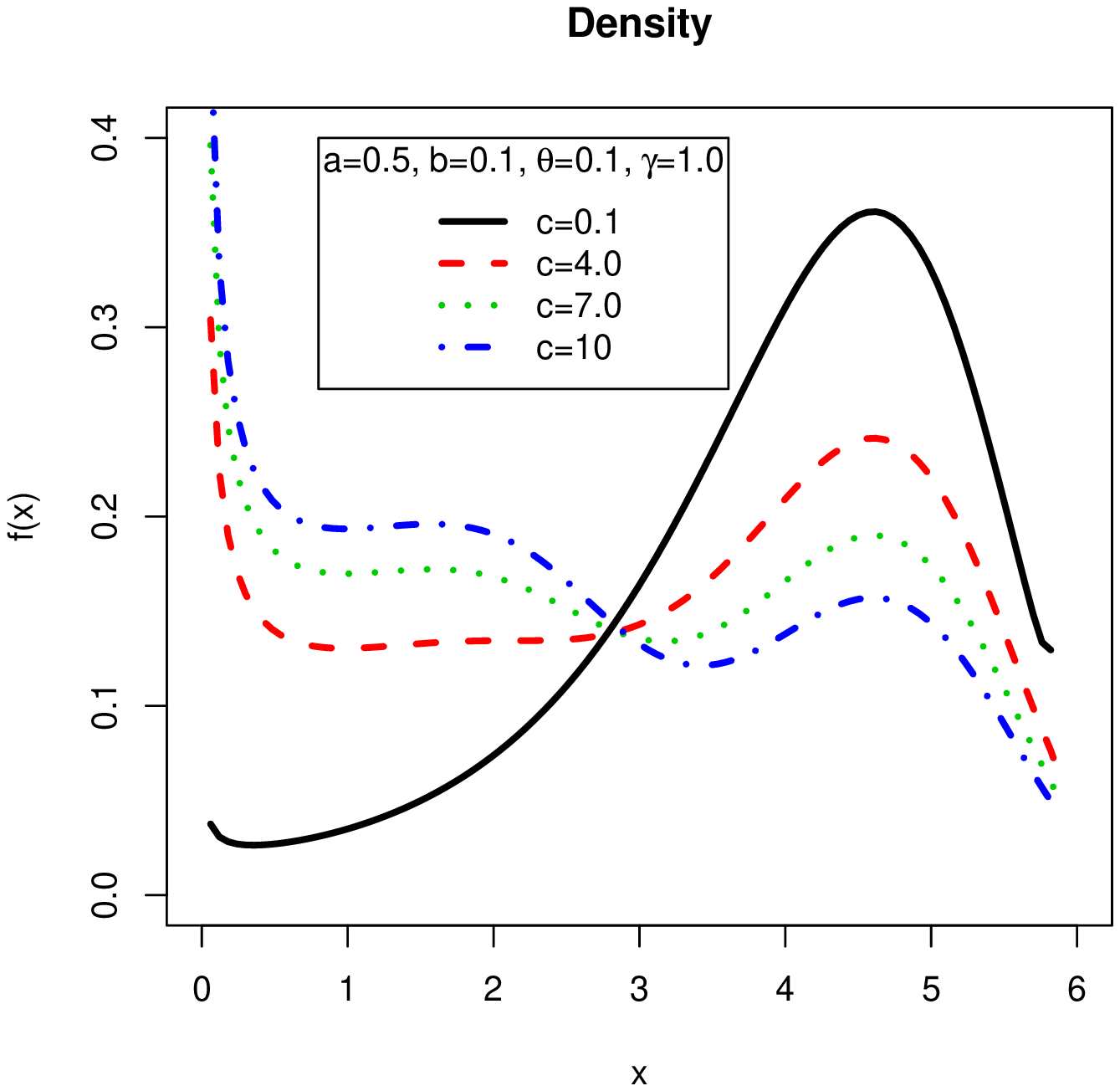}
\includegraphics[scale=0.33]{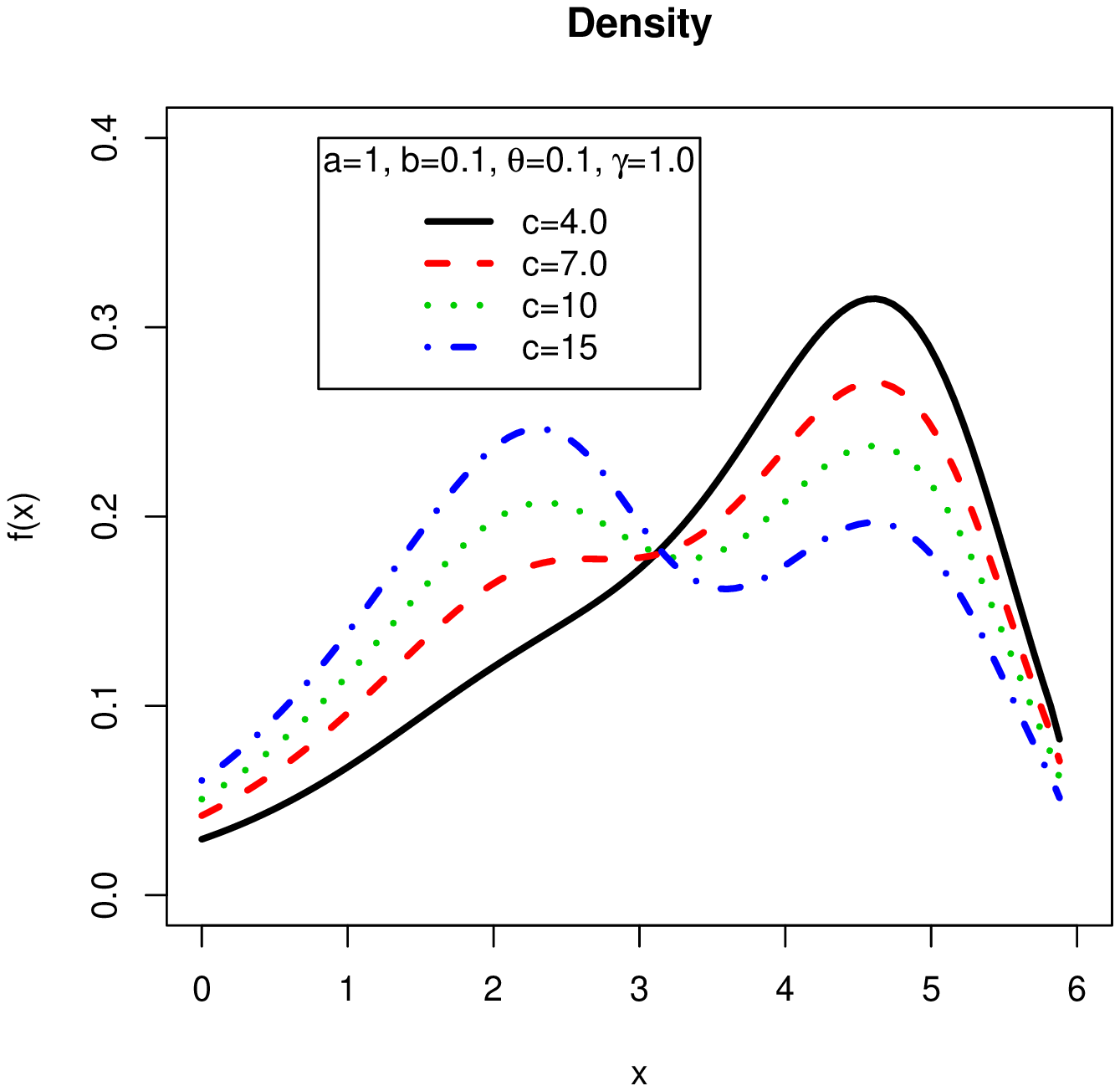}
\includegraphics[scale=0.33]{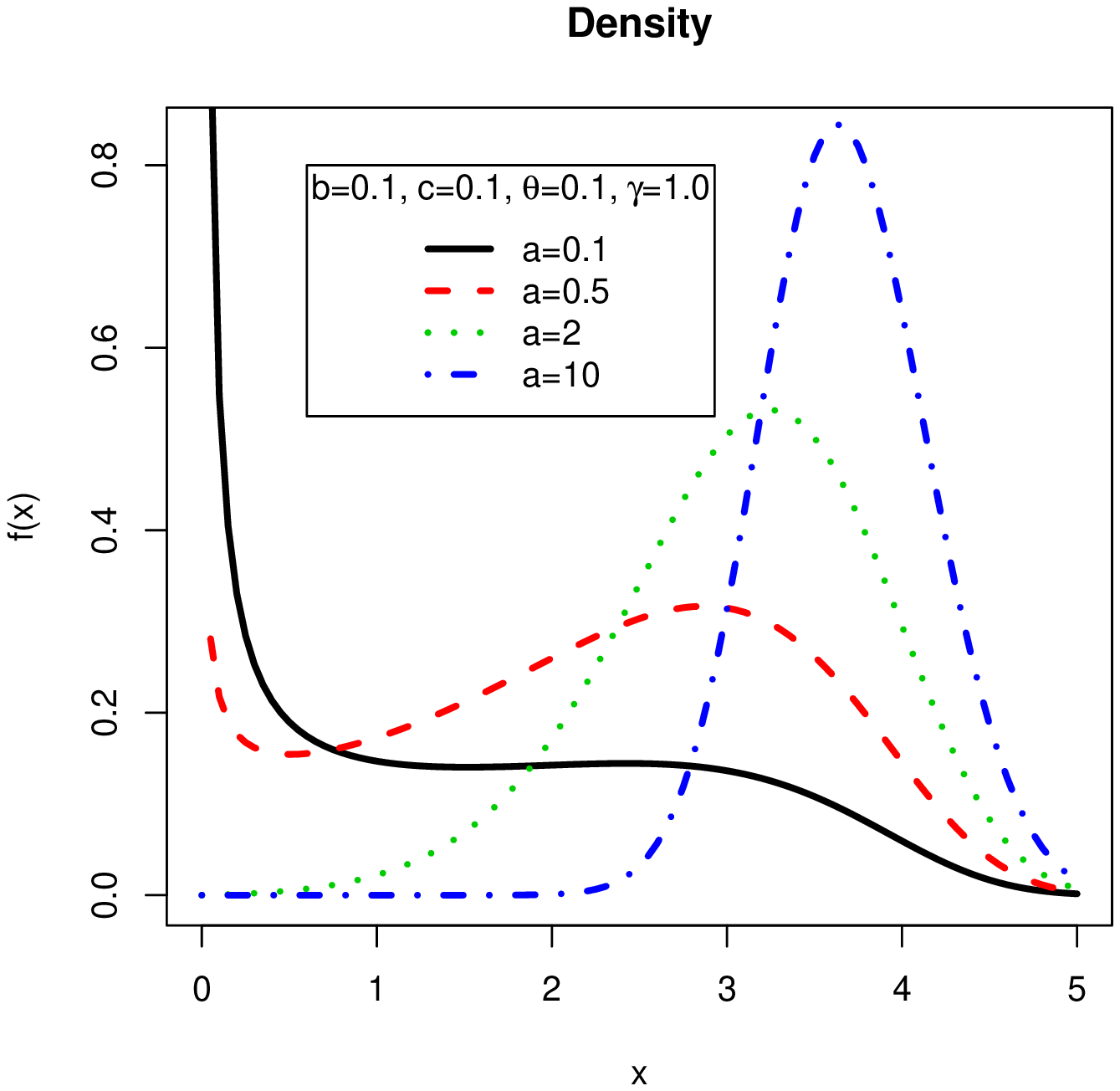}
\includegraphics[scale=0.33]{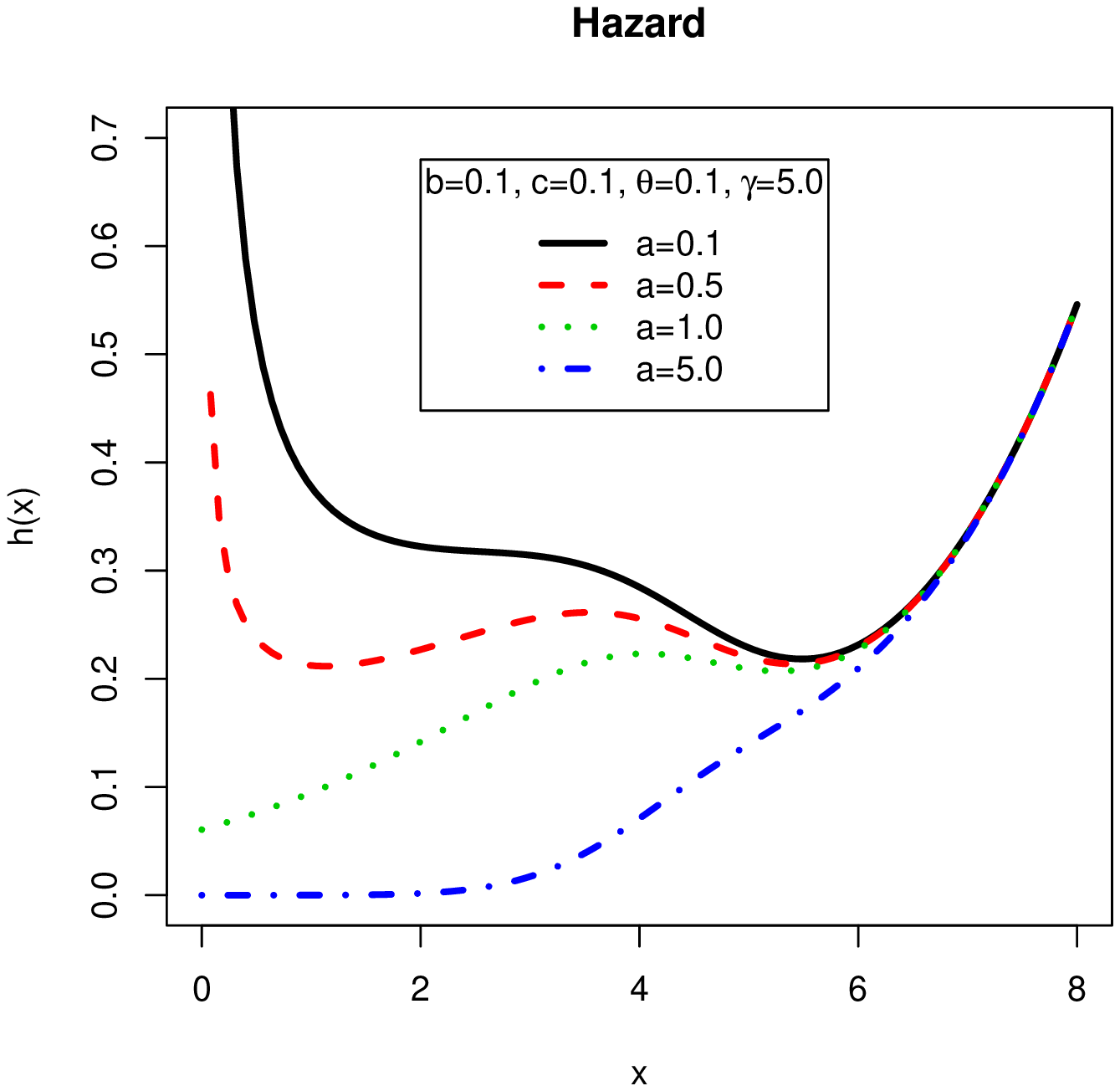}
\includegraphics[scale=0.33]{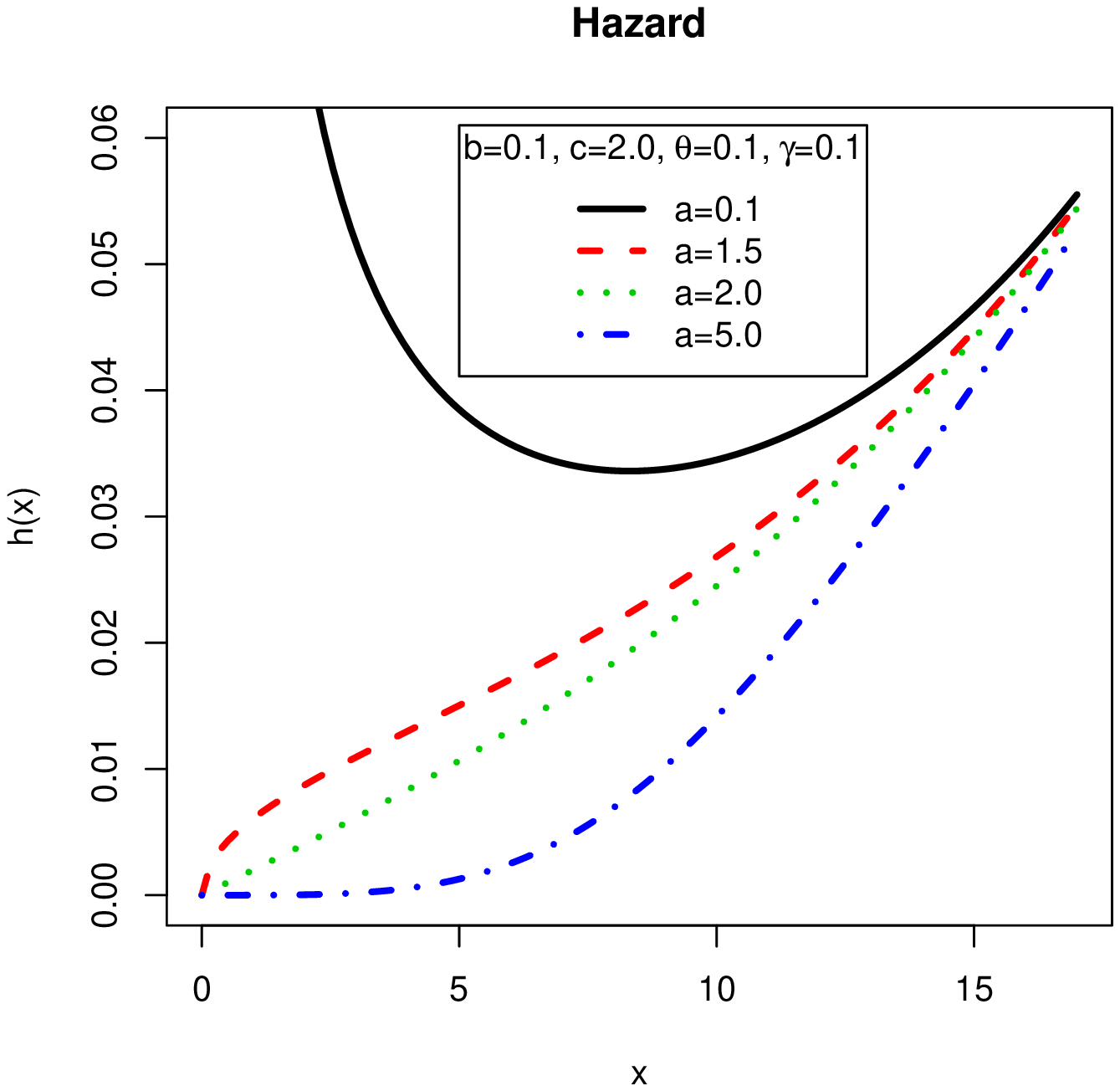}
\includegraphics[scale=0.33]{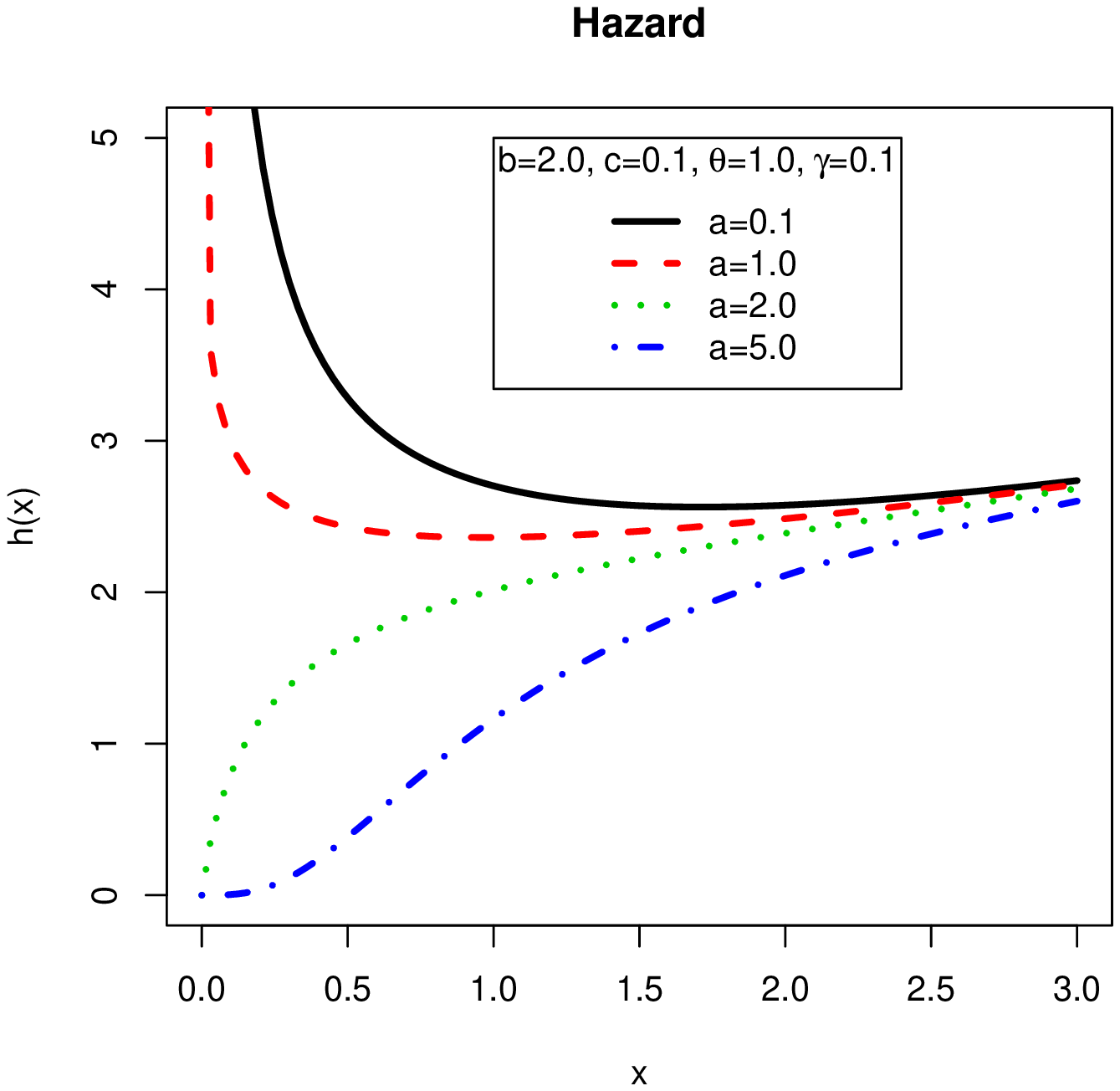}
\vspace{-0.8cm}
\caption[]{pdf and hrf of McG model for some values of parameters.}\label{plot.denh}
\end{figure}

From equations \eqref{eq.FMCG} and \eqref{eq.fMCG}, it is easy to verify that the hazard rate function (hrf) of the McG distribution is given by
\begin{eqnarray*}
h(y; a, b, c, \theta, \gamma) &=& \frac{c \theta e^{\gamma y} \exp(-\frac{\theta}{\gamma}(e^{\gamma y}-1)) } {B(a/c,b)-B_{[1-\exp(-\frac{\theta}{\gamma}(e^{\gamma y}-1))]^c}(a/c,b)} \\
&‎&\times [1-\exp(-\frac{\theta}{\gamma}(e^{\gamma y}-1))]^{a-1} [1-(1-\exp(-\frac{\theta}{\gamma}(e^{\gamma y}-1)))^{c}]^{b-1}, \  \ y>0,
\end{eqnarray*}
and the corresponding reversed hazard rate function reduces to
\begin{eqnarray*}
r(y; a, b, c, \theta, \gamma) &=& \frac{c \theta e^{\gamma y} \exp(-\frac{\theta}{\gamma}(e^{\gamma y}-1)) } {B_{[1-\exp(-\frac{\theta}{\gamma}(e^{\gamma y}-1))]^c} (a/c,b)} [1-\exp(-\frac{\theta}{\gamma}(e^{\gamma y}-1))]^{a-1}\\
&&\times ‎ [1-(1-\exp(-\frac{\theta}{\gamma}(e^{\gamma y}-1)))^{c}]^{b-1},  \ \ \  \ y>0.
\end{eqnarray*}

Figure \ref{plot.denh}  illustrates some of the possible shapes of density and hazard functions for selected values of parameters. For instance, these plots show the hazard function of the new model is much more flexible than the beta Gompertz (BG) and G distributions. The hazard rate function can be bathtub shaped, monotonically increasing or decreasing and upside-down bathtub shaped depending on the parameter values.

The McG distribution contains as sub-models the Kumaraswamy Gompertz (KumG), the BG
\cite{ja-ta-al-14},
and the beta generalized exponential (BGE) or McE
\cite{ba-sa-co-10}
distributions for $c=1$, $a=c$ and $\gamma‎\rightarrow‎ 0$, respectively. It also contains the beta exponential (BE)
\cite{na-ko-06},
the generalized Gompertz (GG)
\cite{el-al-al-13},
and the Kumaraswamy exponential (KumE)
\cite{na-co-or-12}
distributions. The GE
\cite{gu-ku-99}, the G and E distributions are also sub-models. The classes of distributions that are included as special sub-models of the McG distribution are displayed in Figure \ref{plot.relation}.

\begin{figure}[ht]
\centering
\includegraphics[scale=1.1]{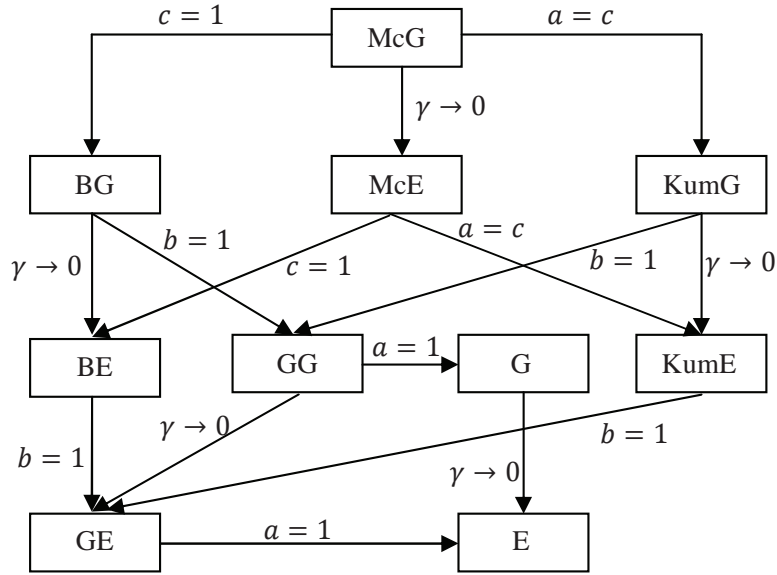}
\caption[]{Relationships of the McG sub-models.}\label{plot.relation}
\end{figure}

If the random variable $Y$ has the McG distribution, then it has the following properties:\\
1. The random variable $V=[1-\exp(-\frac{\theta}{\gamma}(e^{\gamma Y}-1))]^{c}$ satisfies the beta distribution with parameters $a/c$ and $b$. Therefore, the random variable $T=\frac{\theta}{\gamma}(e^{\gamma Y}-1)$ has the BGE (or McG) distribution
\cite{ba-sa-co-10}.
Furthermore, the random variable $Y=G^{-1}(V^{‎1/c})=\frac{1}{\gamma}‎ \log [1-\frac{\gamma}{\theta} \log(1-V^{1/c})‎]$ follows McG distribution. This result helps us in simulating data from McG distribution.
The plots comparing the exact McG density function and the histogram from a simulated data set with size 100 for some parameter values are given in Figure \ref{plot.sim} (left). Also, the plots of empirical distribution function and exact distribution function are given in Figure \ref{plot.sim} (right). These plots indicate that the simulated values are consistent with the McG distribution.\\
2. If $a=i$ and $b=n-i+1$, where $i$ and $n$ are positive integer values, then the $F(y; a, b, c, \theta, \gamma)$  is the cdf of the $i$th order statistic of GG distribution.

\begin{figure}[ht]
\centering
\includegraphics[scale=0.5]{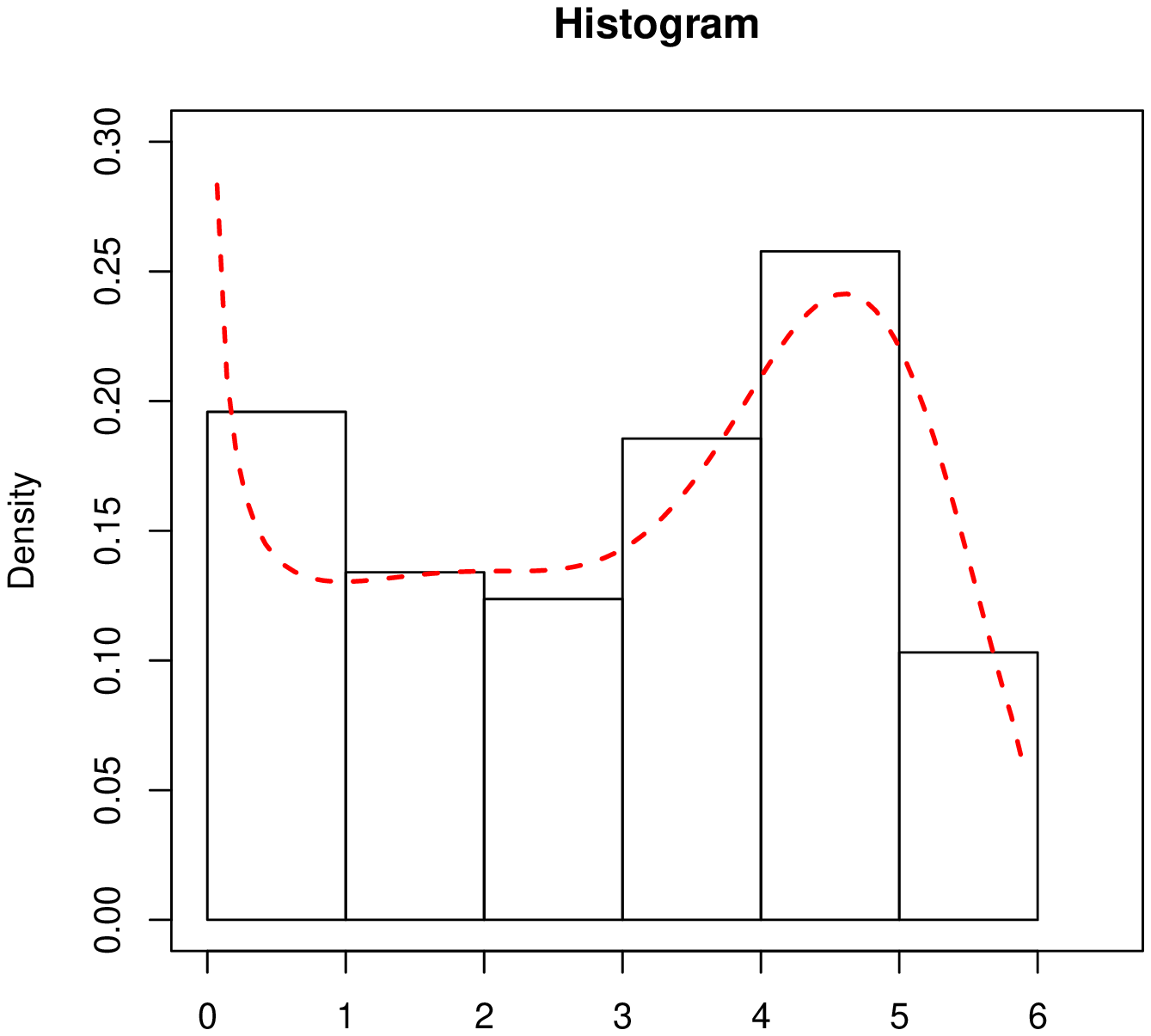}
\includegraphics[scale=0.5]{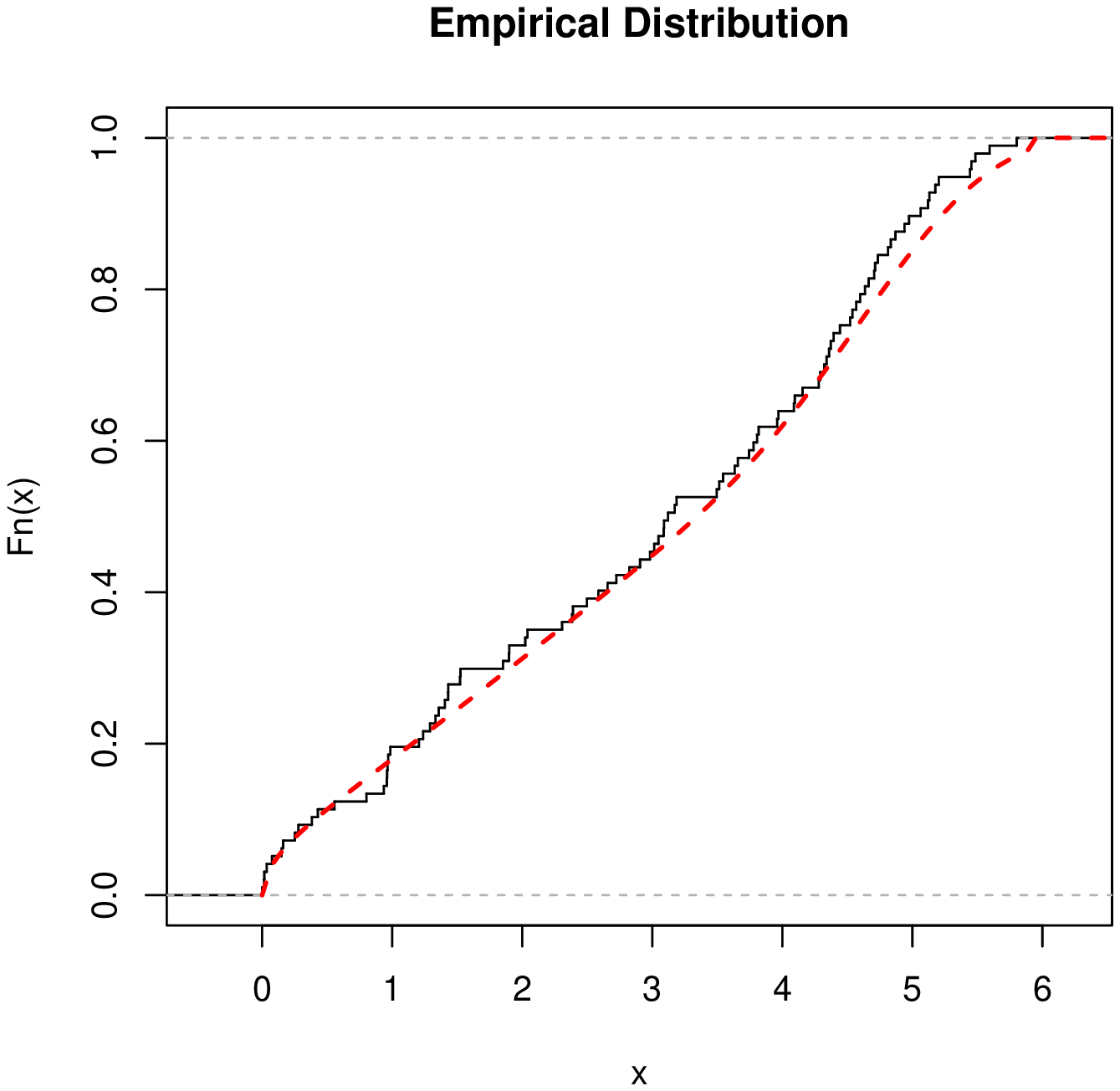}

\vspace{-1cm}
\caption[]{The histogram of a generated data set with size 100 and the exact McG density (left) and the empirical distribution function and exact distribution function (right).}\label{plot.sim}
\end{figure}

\section{General properties}
\label{sec.pro}
In this section, some properties of McG distribution are considered.

\subsection{A useful expansion}

We derive some expansions for the cdf, $k$th moment and moment generating function of the McG distribution. The binomial series expansion is defined by
\begin{equation}\label{eq.binom}
(1-z)^m=\sum_{j=0}^{\infty} (-1)^j \binom {j} {m} z^j= \sum_{j=0}^{\infty} (-1)^j ‎\frac{\Gamma(m+1)}{\Gamma(m-j+1)}‎ ‎\frac{z^j}{j!}‎,
\end{equation}
where $|z|<1$ and $m$ is a positive real non-integer.

The following proposition reveals that the McG distribution can be expressed as a mixture of distribution function of GG distribution, whereas Proposition \ref{prop.mixgg} provides a useful expansion for the pdf in \eqref{eq.fMCG}.

\begin{prop}
The cdf in \eqref{eq.FMCG} is a mixture of distribution function of GG distribution on the form
\begin{equation}\label{eq.Fmix}
F(y; a, b, c, \theta, \gamma)=\sum_{j=0}^{\infty} p_{j} [G(y)]^{a+jc}=\sum_{j=0}^{\infty} p_{j} G_{j}(y),
\end{equation}
where $p_{j}=\frac{(-1)^{j}\Gamma(b)}{ B(a/c,b) \Gamma(b-j) j! (a/c+j)}$ and $G_{j}(y)=(G(y))^{a+jc}$ is the distribution function of a random variable which has a GG distribution with parameters $\theta$, $\gamma$ and $a+jc$.
\end{prop}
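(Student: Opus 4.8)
The plan is to start from the integral form of the incomplete beta ratio in \eqref{eq.FMCG}, namely
$$F(y; a, b, c, \theta, \gamma) = \frac{1}{B(a/c, b)} \int_{0}^{G^c(y)} w^{a/c-1}(1-w)^{b-1}\, dw ,$$
and to expand the factor $(1-w)^{b-1}$ by the binomial series \eqref{eq.binom}. Since $0 \le w \le G^c(y) < 1$ for every finite $y>0$, the series converges throughout the range of integration, so I would write
$$(1-w)^{b-1} = \sum_{j=0}^{\infty} (-1)^j \frac{\Gamma(b)}{\Gamma(b-j)\, j!}\, w^{j},$$
using $m=b-1$ in \eqref{eq.binom} (so that $\Gamma(m+1)=\Gamma(b)$ and $\Gamma(m-j+1)=\Gamma(b-j)$).

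First I would substitute this expansion into the integral and interchange the summation with the integral, reducing the matter to the elementary power integrals
$$\int_{0}^{G^c(y)} w^{a/c-1+j}\, dw = \frac{\bigl(G^c(y)\bigr)^{a/c+j}}{a/c+j}.$$
Using the identity $\bigl(G^c(y)\bigr)^{a/c+j} = [G(y)]^{a+jc}$ and collecting the constant factors then yields
$$F(y; a, b, c, \theta, \gamma) = \sum_{j=0}^{\infty} \frac{(-1)^j \Gamma(b)}{B(a/c, b)\,\Gamma(b-j)\, j!\,(a/c+j)}\, [G(y)]^{a+jc},$$
which is exactly $\sum_{j=0}^{\infty} p_j [G(y)]^{a+jc}$ with the stated weights $p_j$. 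To conclude, I would note that $[G(y)]^{a+jc}$ is precisely the cdf of a generalized Gompertz random variable with parameters $\theta$, $\gamma$ and power $a+jc$, since the GG cdf is a power of the Gompertz cdf $G$; identifying $G_j(y)=[G(y)]^{a+jc}$ therefore gives the claimed mixture.

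The main technical obstacle is the justification for exchanging summation and integration. The cleanest route is to observe that the partial sums of the binomial series converge uniformly to $(1-w)^{b-1}$ on any interval $[0,\rho]$ with $\rho=G^c(y)<1$, so term-by-term integration is legitimate; dominated convergence gives the same conclusion. A secondary point is interpreting $\Gamma(b)/\Gamma(b-j)$ through \eqref{eq.binom}, which makes the formula valid both for real non-integer $b$ and for integer $b$ (where the factor $1/\Gamma(b-j)$ vanishes for $j\ge b$ and the series terminates). Finally, letting $y\to\infty$ and using $F(\infty)=1$ confirms that $\sum_{j=0}^{\infty} p_j = 1$, so the $p_j$ genuinely act as mixing weights.
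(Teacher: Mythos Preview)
Your proof is correct and is precisely the intended derivation. The paper does not spell out a separate proof for this proposition: it introduces the binomial series \eqref{eq.binom} immediately before the statement as the relevant tool and then states the result, so your expansion of $(1-w)^{b-1}$ inside the incomplete beta integral, followed by term-by-term integration, is exactly the argument the paper has in mind. Your added care about uniform convergence on $[0,G^c(y)]$ and about the integer-$b$ case only strengthens the presentation.
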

using binomial expansion \eqref{eq.binom} to the term $[G(y)]^{a+jc}$ in \eqref{eq.Fmix}, we have
\begin{eqnarray*}
[G(y)]^{a +jc} &=& \sum_{k=0}^{\infty} (-1)^k \binom {a+jc} {k} (1-G(y))^k \\
&=‎& \sum_{r=0}^{\infty} \sum_{k=r}^{\infty} (-1)^k \binom {a+jc} {k} \binom {k} {r} [G(y)]^r.
\end{eqnarray*}
Now, \eqref{eq.Fmix} becomes
\begin{eqnarray*}
F(y; a, b, c, \theta, \gamma) = \sum_{j=0}^{\infty} \sum_{r=0}^{\infty} \sum_{k=r}^{\infty} p_j (-1)^{k+r} \binom {a+jc} {k} \binom {k} {r} [G(y)]^r
=‎\sum_{r=0}^{\infty} b_r [G(y)]^r,
\end{eqnarray*}
where $b_r= \sum_{j=0}^{\infty} \sum_{k=r}^{\infty} p_j (-1)^{k+r} \binom {a+jc} {k} \binom {k} {r}$.

\begin{prop}\label{prop.mixgg}
The pdf of McG can be expressed as an infinite mixture of GG densities with parameters $\theta$, $\gamma$ and $(a+jc)$ given by
\begin{eqnarray*}
f(y; a, b, c, \theta, \gamma) = \sum_{j=0}^{\infty} p_j (a+jc) g(y) [G(y)]^{a+jc-1}
=‎ \sum_{r=0}^{\infty} p_j g_j (y),
\end{eqnarray*}
where $g_j (y)= (a+jc) g(y) [G(y)]^{a+jc-1}$. We can write the pdf of McG as
\begin{equation*}
f(y; a, b, c, \theta, \gamma)=g(y) \sum_{r=0}^{\infty} c_r [G(y)]^r,
\end{equation*}
where $c_r= \sum_{j=0}^{\infty} \sum_{k=r}^{\infty} ‎(-1)^{j+k+r} \frac{c \Gamma (b)}{B(a/c,b) j! \Gamma(b-j)}  \binom {a+jc-1} {k} \binom {k} {r}$.
\end{prop}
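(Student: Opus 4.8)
The plan is to start from the closed-form McG density and to peel off the factor $(1-G^c(y))^{b-1}$ using the binomial series expansion \eqref{eq.binom}; the resulting single sum is exactly the first mixture form. The second form then follows by a further double binomial expansion of the power $[G(y)]^{a+jc-1}$ together with a reindexing of the summation.

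First I would write the density in the generated-class form $f(y)=\frac{c}{B(a/c,b)}\,g(y)\,G^{a-1}(y)\,(1-G^c(y))^{b-1}$, where $g$ and $G$ are the Gompertz density and cdf. Applying \eqref{eq.binom} with $z=G^c(y)$ and exponent $b-1$ gives
\[
(1-G^c(y))^{b-1}=\sum_{j=0}^{\infty}(-1)^{j}\frac{\Gamma(b)}{\Gamma(b-j)\,j!}\,[G(y)]^{cj},
\]
so that $f(y)=\frac{c}{B(a/c,b)}\,g(y)\sum_{j=0}^{\infty}(-1)^{j}\frac{\Gamma(b)}{\Gamma(b-j)\,j!}\,[G(y)]^{a+cj-1}$. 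To recover the stated form I would multiply and divide by $(a+jc)$ and invoke the elementary identity $\frac{a+jc}{a/c+j}=c$, which rewrites the prefactor as precisely $p_j(a+jc)$. This yields $f(y)=\sum_{j=0}^{\infty}p_j(a+jc)\,g(y)\,[G(y)]^{a+jc-1}=\sum_{j=0}^{\infty}p_j\,g_j(y)$, establishing the first assertion with $g_j(y)=(a+jc)\,g(y)\,[G(y)]^{a+jc-1}$.

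For the second form I would expand the non-integer power $[G(y)]^{a+jc-1}=(1-(1-G(y)))^{a+jc-1}$ again by \eqref{eq.binom}, obtaining $\sum_{k=0}^{\infty}(-1)^{k}\binom{a+jc-1}{k}(1-G(y))^{k}$, and then expand each $(1-G(y))^{k}$ by the ordinary finite binomial theorem as $\sum_{r=0}^{k}(-1)^{r}\binom{k}{r}[G(y)]^{r}$. Substituting both expansions into the first form and interchanging the order of summation so as to collect the powers $[G(y)]^{r}$, i.e. replacing $\sum_{k}\sum_{r=0}^{k}$ by $\sum_{r}\sum_{k=r}^{\infty}$, produces $f(y)=g(y)\sum_{r=0}^{\infty}c_r\,[G(y)]^{r}$ with $c_r=\sum_{j=0}^{\infty}\sum_{k=r}^{\infty}(-1)^{j+k+r}\frac{c\,\Gamma(b)}{B(a/c,b)\,j!\,\Gamma(b-j)}\binom{a+jc-1}{k}\binom{k}{r}$, exactly as claimed.

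The computation is essentially a bookkeeping exercise, and the main point requiring care is the reindexing of the triple sum together with the implicit interchange of the summation orders, which is legitimate because $0\le G(y)<1$ for every $y>0$, so each binomial series is evaluated strictly inside its radius of convergence and may be rearranged. The one algebraic check worth flagging is the cancellation $\frac{a+jc}{a/c+j}=c$: it is what converts the raw binomial prefactor into $p_j(a+jc)$ and supplies the factor $c$ that appears in $c_r$, reconciling the two displayed representations.
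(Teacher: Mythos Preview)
Your proposal is correct and follows essentially the same route as the paper. The paper does not write out a separate proof for this proposition, but the technique it displays immediately after Proposition~3.1---expanding $(1-G^c)^{b-1}$ by the binomial series \eqref{eq.binom} and then re-expanding the resulting power $[G(y)]^{a+jc}$ via the double binomial $(1-(1-G))^{a+jc}=\sum_k(-1)^k\binom{a+jc}{k}(1-G)^k=\sum_r\sum_{k\ge r}(-1)^{k+r}\binom{a+jc}{k}\binom{k}{r}G^r$---is exactly what you do, applied to the density rather than the cdf; your observation that $\tfrac{a+jc}{a/c+j}=c$ is the bookkeeping step that reconciles the raw binomial coefficient with the weight $p_j(a+jc)$.
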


\subsection{Moments and generating function}

In this section, we deal with the basic statistical properties of McG distribution such as the $k$-th moment and generating function in the following propositions.

\begin{prop}
The $k$-th moment of McG distribution can be expressed as a infinite mixture of the $k$-th moment of GG distributions as follows:
\begin{eqnarray*}
E(Y^{k}) = \int_{0}^{\infty} y^{k} \sum_{j=0}^{\infty} p_{j} (a+jc) g(y) [G(y)]^{a+jc-1}
= \sum_{j=0}^{\infty} p_{j} E(Y_{j}^{k}),
\end{eqnarray*}
where
\begin{equation*}
E(Y_{j}^{k}) = u_{jk} \sum_{i=0}^{\infty} \sum_{r=0}^{\infty} \binom {a+jc-1} {i} ‎\frac{(-1)^{i+r}}{\Gamma (r+1)}
e^{‎\frac{\theta}{\gamma}‎ (i+1)} ‎[\frac{\theta}{\gamma}‎ (i+1)]^r [‎\frac{-1}{\gamma (r+1)}‎]^{k+1},
\end{equation*}
and $u_{jk}= \theta (a+jc) \Gamma (k+1)$.
\end{prop}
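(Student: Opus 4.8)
The plan is to build on the mixture representation established in Proposition~\ref{prop.mixgg}. Writing the pdf \eqref{eq.fMCG} as $f(y)=\sum_{j=0}^{\infty}p_j g_j(y)$ with $g_j(y)=(a+jc)\,g(y)\,[G(y)]^{a+jc-1}$ the GG density, I would multiply by $y^k$ and integrate, interchanging the infinite sum with the integral so that $E(Y^k)=\sum_{j=0}^{\infty}p_j\int_0^\infty y^k g_j(y)\,dy=\sum_{j=0}^{\infty}p_j E(Y_j^k)$. This is exactly the first displayed equality, and it reduces the whole problem to computing the $k$th moment $E(Y_j^k)$ of a single GG variate with parameters $\theta,\gamma,a+jc$.

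For the second step I would insert the Gompertz density $g(y)=\theta e^{\gamma y}\exp(-\tfrac{\theta}{\gamma}(e^{\gamma y}-1))$ and cdf $G(y)=1-\exp(-\tfrac{\theta}{\gamma}(e^{\gamma y}-1))$, so that $E(Y_j^k)=\theta(a+jc)\int_0^\infty y^k e^{\gamma y}\exp(-\tfrac{\theta}{\gamma}(e^{\gamma y}-1))\,[1-\exp(-\tfrac{\theta}{\gamma}(e^{\gamma y}-1))]^{a+jc-1}\,dy$. Applying the binomial series \eqref{eq.binom} to the last bracket with $z=\exp(-\tfrac{\theta}{\gamma}(e^{\gamma y}-1))$ and exponent $a+jc-1$ produces the factor $\sum_{i=0}^\infty(-1)^i\binom{a+jc-1}{i}$ together with $\exp(-\tfrac{\theta}{\gamma}(e^{\gamma y}-1)(i+1))$; combining this with the $\exp(-\tfrac{\theta}{\gamma}(e^{\gamma y}-1))$ coming from $g(y)$ collapses the exponent to multiplier $(i+1)$, which I would then split as $e^{\frac{\theta}{\gamma}(i+1)}\exp(-\tfrac{\theta}{\gamma}(i+1)e^{\gamma y})$ in order to extract the constant $e^{\frac{\theta}{\gamma}(i+1)}$.

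Next I would Taylor-expand the surviving double exponential, $\exp(-\tfrac{\theta}{\gamma}(i+1)e^{\gamma y})=\sum_{r=0}^\infty\frac{(-1)^r}{r!}[\tfrac{\theta}{\gamma}(i+1)]^r e^{\gamma r y}$, which supplies the $r$-sum, the factors $\frac{(-1)^r}{\Gamma(r+1)}$ and $[\tfrac{\theta}{\gamma}(i+1)]^r$, and which together with the leftover $e^{\gamma y}$ leaves the elementary kernel $\int_0^\infty y^k e^{\gamma(r+1)y}\,dy$. Evaluating this by the gamma integral as $\Gamma(k+1)\,[\tfrac{-1}{\gamma(r+1)}]^{k+1}$ furnishes the remaining two factors, and pulling $\theta(a+jc)\Gamma(k+1)$ out front identifies the constant $u_{jk}$; collecting everything reproduces the claimed double series for $E(Y_j^k)$.

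The step I expect to be the crux is the evaluation of $\int_0^\infty y^k e^{\gamma(r+1)y}\,dy$: with $\gamma>0$ this integral diverges, and the factor $[\tfrac{-1}{\gamma(r+1)}]^{k+1}$ arises only by formally substituting the negative rate $-\gamma(r+1)$ into $\int_0^\infty y^k e^{-\lambda y}\,dy=\Gamma(k+1)/\lambda^{k+1}$. Although the full integrand defining $E(Y_j^k)$ is perfectly convergent (the Gompertz double-exponential decay dominates the polynomial-times-exponential growth), it is the term-by-term integration following the Taylor expansion of $\exp(-\tfrac{\theta}{\gamma}(i+1)e^{\gamma y})$ that destroys convergence at the level of individual summands. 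I would therefore flag that this interchange is formal, and that the resulting double series should be understood as an analytic-continuation representation of the moment rather than one justified by dominated convergence.
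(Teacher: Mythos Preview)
Your derivation is correct and is precisely the route implicit in the paper: use the mixture representation of Proposition~\ref{prop.mixgg}, apply the binomial expansion~\eqref{eq.binom} to $[G(y)]^{a+jc-1}$, Taylor-expand the resulting double exponential, and evaluate the remaining $y^k e^{\gamma(r+1)y}$ integral via the gamma-integral formula. The paper offers no proof beyond stating the result, but the structure of the displayed double series (the binomial coefficient, the $e^{\frac{\theta}{\gamma}(i+1)}$ factor, the $[\frac{\theta}{\gamma}(i+1)]^r/\Gamma(r+1)$ term, and the $[\frac{-1}{\gamma(r+1)}]^{k+1}$ factor) lines up term-for-term with your steps, so there is no alternative derivation hiding behind it.

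Your cautionary remark about the divergent integral $\int_0^\infty y^k e^{\gamma(r+1)y}\,dy$ is well taken and goes beyond what the paper addresses: the formula is indeed a formal series obtained by interchanging the Taylor sum with the integral, and the paper does not discuss this point.
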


\begin{prop}
An explicit expression for the moment generating function of McG distribution follows from Proposition \ref{prop.mixgg},
\begin{eqnarray*}
M_{Y} (t) =\int_{0}^{\infty} e^{tx} \sum_{j=0}^{\infty} p_j (a+jc) g(y) [G(y)]^{a+jc-1}
=‎ \sum_{j=0}^{\infty} p_j  M_{Y_{j}} (t),
\end{eqnarray*}
where
\begin{equation*}
M_{Y_j} (t)= ‎\frac{(a+jc) \theta}{\gamma}‎ \sum_{i=0}^{\infty} \sum_{k=0}^{\infty} (-1)^i \binom {a+jc-1} {i} \binom {t/{\gamma}} {k} ‎\frac{\Gamma(k+1)}{[\frac{(a+jc) \theta}{\gamma}]^{k+1}}‎ .
\end{equation*}
\end{prop}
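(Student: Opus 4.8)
The plan is to start from the GG mixture representation in Proposition \ref{prop.mixgg}: since $f(y)=\sum_{j=0}^{\infty} p_j\, g_j(y)$ with $g_j(y)=(a+jc)\,g(y)\,[G(y)]^{a+jc-1}$, integrating $e^{ty}$ against $f$ term by term reduces the problem to the mgf of each GG component, $M_{Y_j}(t)=\int_0^{\infty} e^{ty}g_j(y)\,dy$, followed by the sum $\sum_j p_j M_{Y_j}(t)$. Thus the whole content is the evaluation of a single $M_{Y_j}(t)$, using $g(y)=\theta e^{\gamma y}\exp(-\tfrac{\theta}{\gamma}(e^{\gamma y}-1))$ and $G(y)=1-\exp(-\tfrac{\theta}{\gamma}(e^{\gamma y}-1))$.

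First I would linearize the power $[G(y)]^{a+jc-1}$ via the binomial series \eqref{eq.binom}, writing $[1-z]^{a+jc-1}=\sum_{i=0}^{\infty}(-1)^i\binom{a+jc-1}{i}z^i$ with $z=\exp(-\tfrac{\theta}{\gamma}(e^{\gamma y}-1))$. Multiplying by the $\exp(-\tfrac{\theta}{\gamma}(e^{\gamma y}-1))$ factor in $g(y)$ merges the two exponentials into $\exp(-\tfrac{\theta(i+1)}{\gamma}(e^{\gamma y}-1))$; this is the step producing the index $i$ and the coefficient $(-1)^i\binom{a+jc-1}{i}$. Next I would expand $e^{ty}=(e^{\gamma y})^{t/\gamma}=[1+(e^{\gamma y}-1)]^{t/\gamma}=\sum_{k=0}^{\infty}\binom{t/\gamma}{k}(e^{\gamma y}-1)^k$, again by \eqref{eq.binom}; this is precisely where the factor $\binom{t/\gamma}{k}$ enters and gives the second summation.

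After interchanging the sums with the integral, each inner integral is
\begin{equation*}
\int_0^{\infty} e^{\gamma y}(e^{\gamma y}-1)^k\exp\!\big(-\tfrac{\theta(i+1)}{\gamma}(e^{\gamma y}-1)\big)\,dy .
\end{equation*}
The key simplification is the substitution $v=e^{\gamma y}-1$, so that $e^{\gamma y}\,dy=dv/\gamma$ and the range becomes $(0,\infty)$; this collapses the integral to the elementary Gamma integral $\frac{1}{\gamma}\int_0^{\infty} v^k e^{-\frac{\theta(i+1)}{\gamma}v}\,dv=\frac{\Gamma(k+1)}{\gamma\,[\frac{\theta(i+1)}{\gamma}]^{k+1}}$. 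Combining the prefactor $(a+jc)\theta$ with the $1/\gamma$ from the substitution yields $\frac{(a+jc)\theta}{\gamma}$ and reproduces the stated double series. I note that the natural derivation places the merged index $(i+1)$ in the denominator, so I would write $[\tfrac{(i+1)\theta}{\gamma}]^{k+1}$ there, consistent with the $(i+1)$ appearing throughout the $k$th-moment proposition.

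The main obstacle is justifying the term-by-term operations rather than the algebra. The expansion $e^{ty}=[1+(e^{\gamma y}-1)]^{t/\gamma}=\sum_k\binom{t/\gamma}{k}(e^{\gamma y}-1)^k$ converges only for $e^{\gamma y}-1<1$, i.e.\ on $y<\gamma^{-1}\ln 2$, so interchanging this series with the integral over $(0,\infty)$ is a formal step. I would present it in the same expansion-based spirit as the $k$th-moment proposition (which uses the analogous expansion of $\exp(-\tfrac{\theta(i+1)}{\gamma}e^{\gamma y})$), treating the resulting double series as the generating-function expansion rather than attempting a dominated-convergence justification.
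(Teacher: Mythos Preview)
Your approach is correct and matches the paper's: the proposition is stated without a detailed proof, merely indicating that it follows from the mixture representation in Proposition~\ref{prop.mixgg}, which is exactly the route you take. Your observation that the denominator should read $[\tfrac{(i+1)\theta}{\gamma}]^{k+1}$ rather than $[\tfrac{(a+jc)\theta}{\gamma}]^{k+1}$ is well taken---the derivation you outline indeed produces $(i+1)$ there (consistent with the $(i+1)$ factors in the $k$th-moment proposition), so the printed formula appears to contain a typo; your caveat about the formal nature of the binomial expansion of $e^{ty}$ is also appropriate, as the paper treats these series manipulations formally throughout.
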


\subsection{Order statistics}

Order statistics make their appearance in many areas of statistical theory and practice. Let the random variable $Y_{i:n}$ be the $i$th order statistic ($Y_{1:n}‎\leq Y_{2:n}‎\leq \cdots ‎\leq Y_{n:n}‎ ‎‎$) in a sample of size $n$ from the McG distribution. The pdf and cdf of $Y_{i:n}$ for $i=1,2, \ldots, n$ are given by
\begin{eqnarray}\label{eq.fin}
f_{i:n}(y) &=& \frac{1}{B(i,n-i+1)} f(y) [F(y)]^{i-1} [1-F(y)]^{n-i} \nonumber\\
&=& \frac{1}{B(i,n-i+1)}\sum_{k=0}^{n-i} \dbinom{n-i}{k} (-1)^{k} f(y) [F(y)]^{k+i-1},
\end{eqnarray}
and
\begin{eqnarray}\label{eq.Fin}
F_{i:n}(y) = \int_{0}^{y} f_{i:n}(t)dt
= \frac{1}{B(i,n-i+1)} \sum_{k=0}^{n-i} \frac{(-1)^{k} }{k+i}\dbinom{n-i}{k} [F(y)]^{k+i},
\end{eqnarray}
respectively, where  $F(y)= \sum_{r=0}^{\infty} b_{r} G(y)$. We use throughout an equation by
(\cite{gr-ry-07}, page 17)
for a power series raised to a positive integer $m$ given by
\begin{equation}\label{eq.pow}
\left(\sum_{r=0}^\infty b_r\, u^r\right)^m=\sum_{r=0}^\infty c_{m,r}\,u^r,
\end{equation}
where the coefficients $c_{m,r}$ (for $r=1,2,\ldots$) are easily determined from the recurrence equation
\begin{equation*}
c_{m,r}=(r\,b_0)^{-1} \sum_{k=1}^{r} \,[k\,(m+1)-r+k]\,b_k\,c_{m,r-k},
\end{equation*}
where $c_{m,0}=b_0^m$. Hence, the coefficients $c_{m,r}$ can be calculated from $c_{m,0},\ldots,c_{m,r-1}$ and therefore, from the quantities $b_0,\ldots,b_{r}$.  Using \eqref{eq.pow}, the equations \eqref{eq.fin} and \eqref{eq.Fin} can be written as
\begin{eqnarray*}
&&f_{i:n}(y)=\frac{1}{B(i,n-i+1)} \sum_{k=0}^{n-i} \sum_{r=1}^{\infty} \frac{r}{k+i}  (-1)^{k} \dbinom{n-i}{k} c_{i+k,r} g(y) [G(y)]^{r-1},\\
&&F_{i:n}(y)=\frac{1}{B(i,n-i+1)} \sum_{k=0}^{n-i} \sum_{r=0}^{\infty} \frac{1}{k+i} (-1)^{k} \dbinom{n-i}{k} c_{i+k,r} [G(y)]^{r}.
\end{eqnarray*}
An explicit expression for the $s$th moments of $Y_{i:n}$ can be obtained as
\begin{eqnarray}
E[Y_{i:n}^{s}] &=& \frac{1}{B(i,n-i+1)} \sum_{k=0}^{n-i} \sum_{r=1}^{\infty} \frac{r}{k+i} (-1)^{k} \dbinom{n-i}{k} c_{i+k,r} \int_{0}^{+\infty} t^{s} g(t) [G(t)]^{r-1} dt \nonumber\\
&=&\frac{\theta \Gamma(s+1)}{B(i,n-i+1)} \sum_{k=0}^{n-i} \sum_{r=1}^{\infty} \frac{r}{k+i} (-1)^{k} \dbinom{n-i}{k} c_{i+k,r}\nonumber\\
&&\times
\sum_{i_{1}=0}^{\infty} \sum_{i_{2}=0}^{\infty} \dbinom {r\lambda-1} {i_{1}} \frac{(-1)^{i_{1}+i_{2}}}{\Gamma(i_{2}+1)} e^{\frac{\theta} {\gamma}(i_{1}+1)} [\frac{\theta(i_{1}+1)} {\gamma}]^{i_{2}}[\frac{-1}{\gamma(i_{2}+1)}]^{s+1}.
\end{eqnarray}

\subsection{Quantile measures}

In this section, we consider the effect of each shape parameters $a$, $b$ and $c$ on the skewness and kurtosis of the McG distribution. To illustrate this effect, we use measures based on quantiles.
The quantile function of the $McG (a,b,c,\gamma, \theta)$ distribution say $Q(t)$ can be obtained as
$$Q(t)=\frac{1}{\gamma} \log (1-\frac{\gamma}{\theta} \log (1-Q_{a/c,b}^{\frac{1}{c}}(t))), \  \ 0<t<1,$$
where $Q_{a/c,b}(t)=I_{t}^{-1}(a/c,b)$ denotes the $t$th quantile of beta distribution with parameters $a/c$ and $b$. The Bowley skewness
(see \cite{ke-ke-62})
based on quantiles can be calculated by
$$ {\mathcal{B}}=\frac{Q(\frac{3}{4})-2Q(\frac{1}{2})+Q(\frac{1}{4})}{Q(\frac{3}{4})-Q(\frac{1}{4})}, $$
and the Moors kurtosis
(see \cite{moors-88})
is defined as
$$ {\mathcal{M}}=\frac{Q(\frac{7}{8})-Q(\frac{5}{8})+Q(\frac{3}{8})-Q(\frac{1}{8})}{Q(\frac{6}{8})-Q(\frac{2}{8})}, $$
where $Q(.)$ denotes the quantile function. These measures are less sensitive to outliers and they exist even for distributions without moments.
For the standard normal  and  the classical standard $t$ distributions with 10 degrees of freedom, the Bowley measure is 0.
The Moors measure  for these 
 distributions
is  1.2331 and 1.27705, respectively.


In Figure \ref{plot.skur1}, we plot Bowley measure (holding $b = 0.5$, $\gamma = 1$ and $\theta=0.1$ fixed) as a function of $c$ for fixed values of $a$ (left) and Bowley measure (holding $a = 0.5$, $\gamma = 1$ and $\theta=0.1$ fixed) as a function of $c$ for some values of $b$ (right). In Figure \ref{plot.skur2}, we plot Moors measure (for $b = 0.5$, $\gamma = 1$ and $\theta=0.1$) as a function of $c$ for selected values of $a$ (left) and Moors measure (for $a = 0.5$, $\gamma = 1$ and $\theta=0.1$) as a function of $c$ for some values of $b$ (right). These plots indicate that these measures can be sensitive to the three shape parameters $a$, $b$ and $c$.

\begin{figure}[ht]
\centering
\includegraphics[scale=0.5]{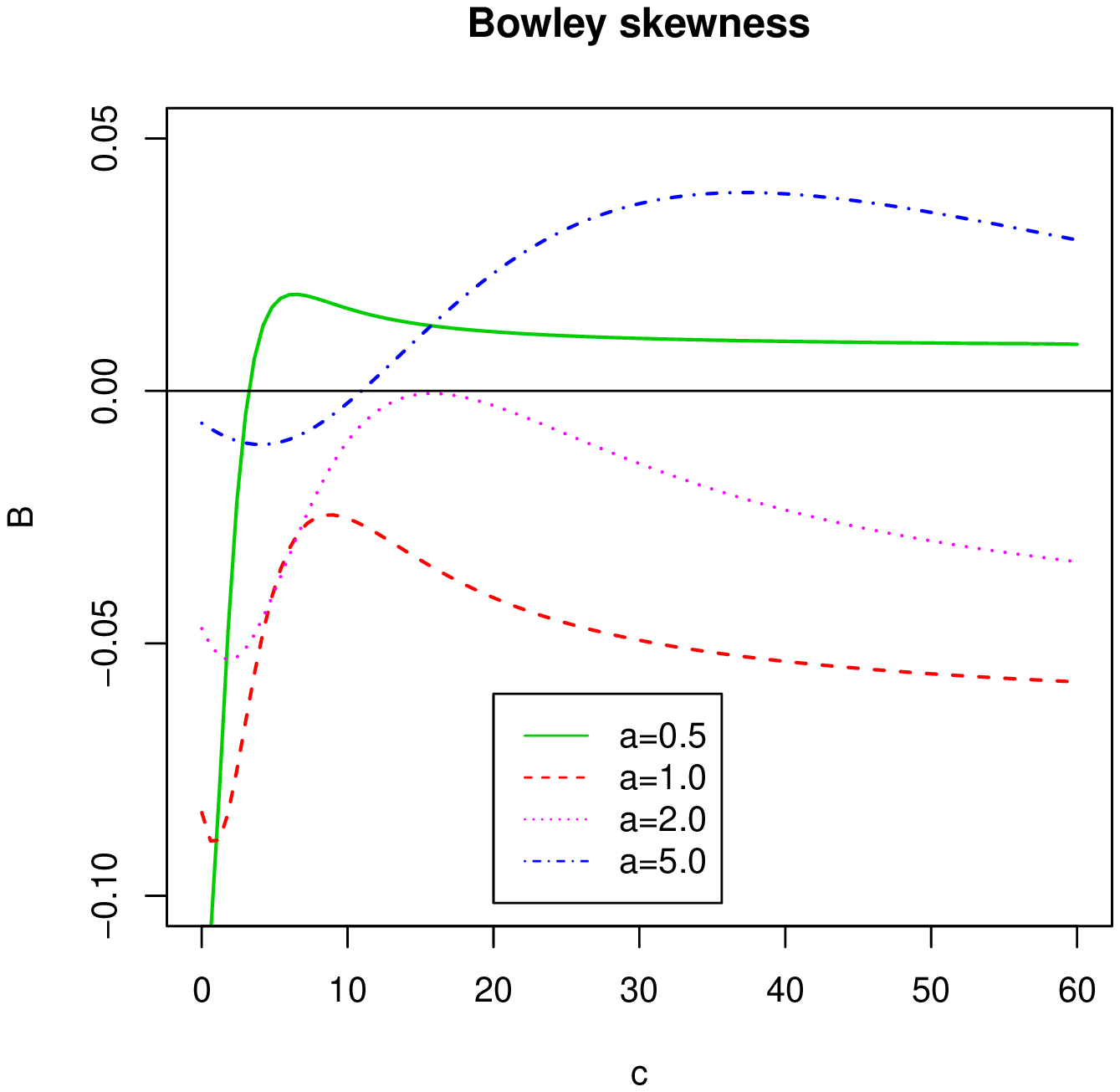}
\includegraphics[scale=0.5]{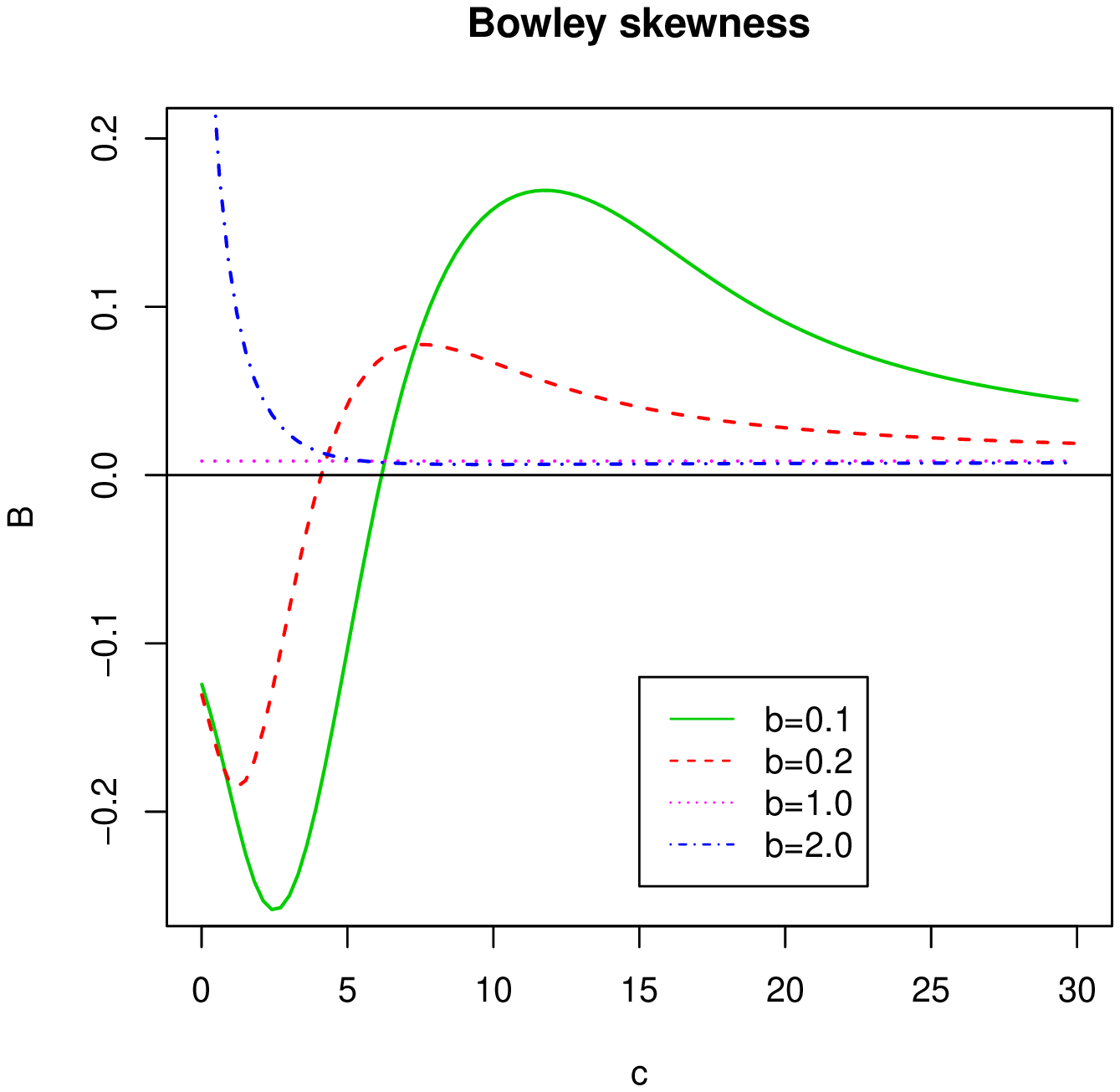}

\vspace{-1cm}
\caption{The Bowley’s skewness of the McG distribution as a function of $c$.}\label{plot.skur1}
\end{figure}

\begin{figure}[ht]
\centering
\includegraphics[scale=0.5]{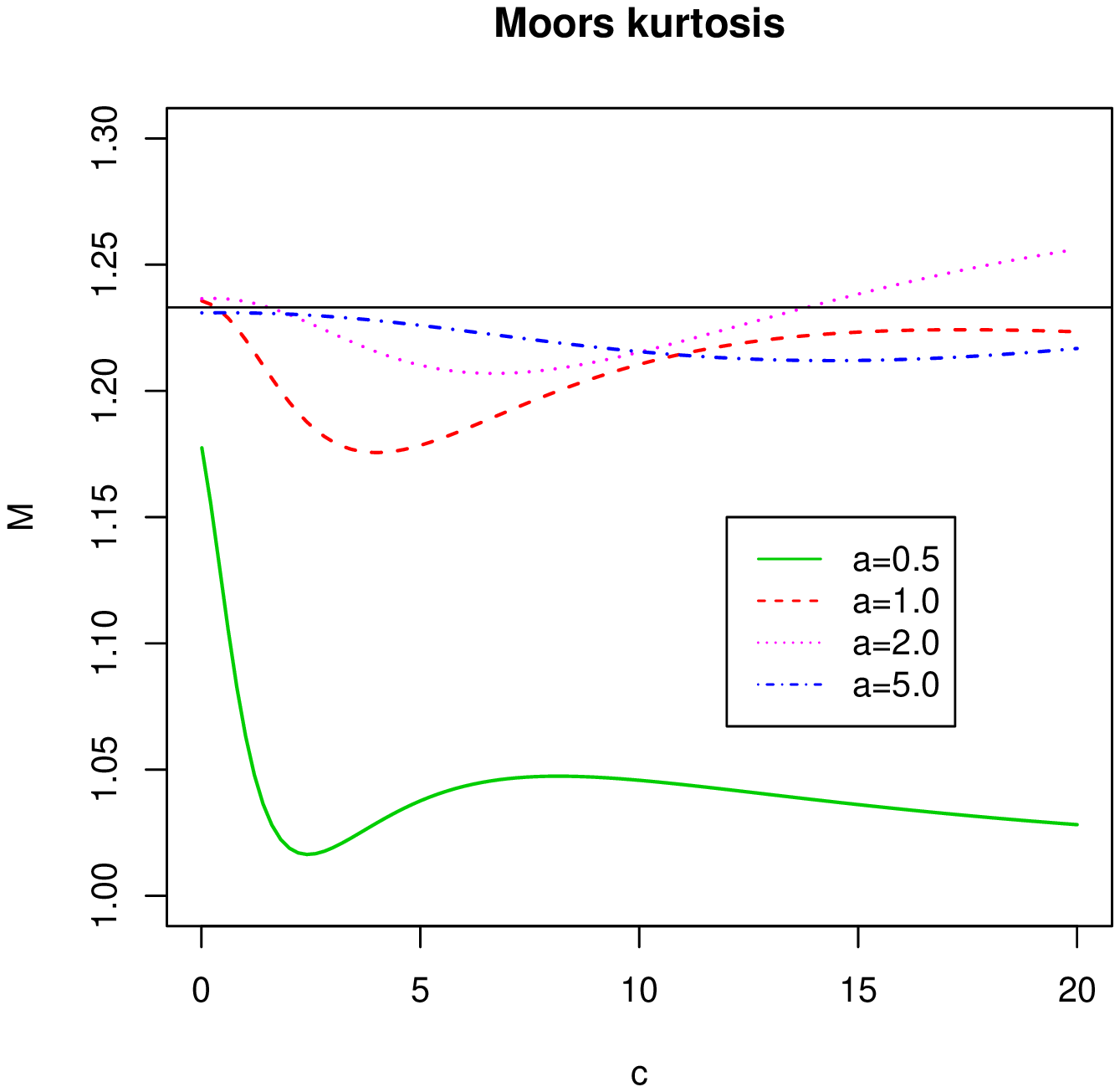}
\includegraphics[scale=0.5]{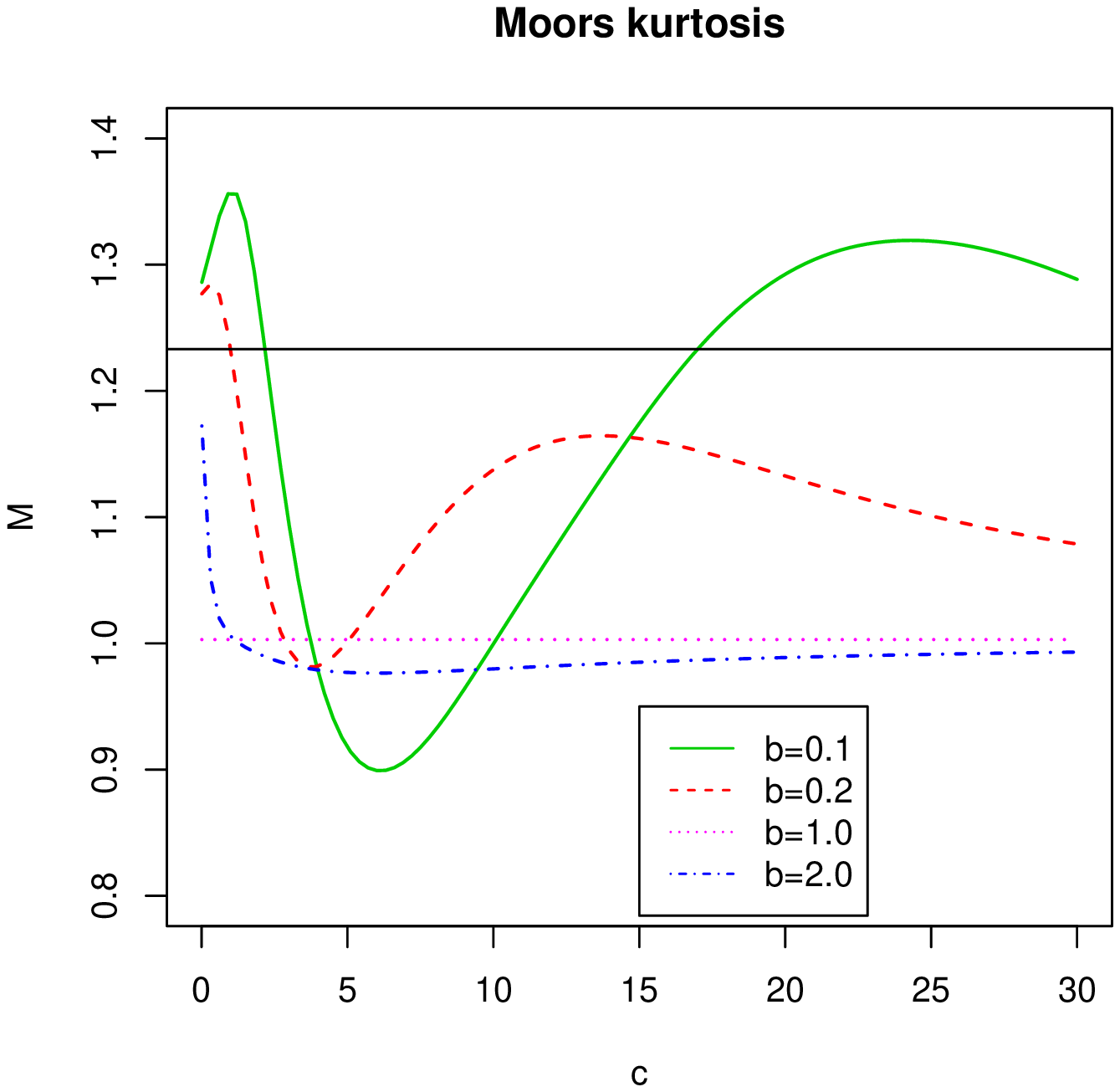}

\vspace{-1cm}
\caption{The Moors’ skewness of the McG distribution as a function of $c$.}\label{plot.skur2}
\end{figure}

\subsection{Entropy}

The entropy of random variable is defined in terms of its probability distribution and can be shown to be a good measure of randomness or uncertainty. The Shannon's entropy of a continuous random variable $Y$ with pdf $f(y)$ is defined by
\cite{shan-48}
as
\begin{equation*}
H_{Sh} (f)= - E_f [\log f(Y)]= - \int_{0}^{\infty} f(y) \log f(y) dy.
\end{equation*}
Hence, the Shannon entropy for McG distribution can be expressed in the form
\begin{equation}\label{eq.HSH}
H_{Sh} (f)= \log (‎\frac{B(a/c,b)}{c \theta}‎) - \theta / ‎\gamma‎ - ‎\gamma‎ E(Y)+ \theta / ‎\gamma M_Y (\gamma) + (a-1) ‎\zeta‎ (a,b) + (b-1) ‎\zeta‎ (b,a),
\end{equation}
where $‎\zeta‎ (r,s)=\psi (r+s) - \psi (r)‎$ and $\psi (.)‎$ represents the digamma function. The last two terms in \eqref{eq.HSH} follows immediately from the first two conditions in Lemma 1 of
\cite{zo-ba-09}.
The R\'{e}nyi entropy is defined by
\begin{equation*}
H_{‎\rho‎} (f)= ‎\frac{1}{1-‎\rho‎}‎  \log ‎( ‎ ‎‎\int_{-\infty}^{\infty} [f(y)]^{‎\rho} dy ),
\end{equation*}
where $\rho >0$ and $\rho ‎\neq 1‎$. The Shannon entropy is derived from $\lim_{\rho ‎\rightarrow 1‎} H_{‎\rho‎} (f)$.
An explicit expression of R\'{e}nyi entropy for McG distribution is obtained as
\begin{eqnarray*}
H_{‎\rho‎} (f) &=& - \log (\theta) + ‎\frac{\rho}{1-\rho} \log (‎\frac{c}{B(a/c,b)}‎) + ‎\frac{1}{1-\rho} \log (B(a \rho- \rho + cj +1, \rho))‎\\
&‎&+ ‎\frac{1}{1-\rho} \log (\sum_{j=0}^{\infty} (-1)^j \binom {b \rho- \rho} {j}) E \{(1-\gamma / \theta \log (1-U))^{\rho -1} \},
\end{eqnarray*}
where $U$ has a beta distribution with parameters $a \rho - \rho + cj -1$ and $\rho$.

\section{Estimation}
\label{sec.est}
Let $Y_1, \ldots, Y_n$ be a random sample of size $n$ from the $McG (a, b, c, \theta, \gamma)$ distribution and ${\boldsymbol\Theta}= (a, b, c, \theta, \gamma)$ be the unknown parameter vector. The log-likelihood function is given by
\begin{eqnarray}\label{eq.lik}
l({\boldsymbol\Theta}) &=& n \log (c \theta) -n \log (B(a/c,b)) + \gamma \sum _{i=1}^{n} y_i -\frac{\theta}{\gamma}‎‎ \sum_{i=1}^{n} (e^{\gamma y_i}-1)\nonumber‎\\
&&+‎ (a-1) \sum_{i=1}^{n} ‎\log (1-t_i) + (b-1) \sum_{i=1}^{n} \log (1-(1-t_i)^c),
\end{eqnarray}
where $t_i= \exp({-‎\frac{\theta}{\gamma} (e^{\gamma y_i}-1)‎})$. The maximum likelihood estimation (MLE) of ${\boldsymbol\Theta}$ is obtained by solving the nonlinear equations, $U({\boldsymbol\Theta})= (U_{a}({\boldsymbol\Theta}), U_{b}({\boldsymbol\Theta}), U_{c}({\boldsymbol\Theta}), U_{{\theta}}(\boldsymbol\Theta), U_{\gamma}({\boldsymbol\Theta}))^T= ‎\boldsymbol{0}‎$, where
\begin{eqnarray*}
U_{a}({\boldsymbol\Theta})&=&\frac{\partial l(\boldsymbol\Theta)}{\partial a}={n/c} [\psi (a/c+b) - \psi (a/c)]+ \sum_{i=1}^{n} \log (1-t_{i}),\\
U_{b}({\boldsymbol\Theta})&=&\frac{\partial l(\boldsymbol\Theta)}{\partial b}= n [\psi (a/c+b) -\psi(b)] + \sum_{i=1}^{n} \log(1-(1-t_{i}^{c})),\\
U_{c}({\boldsymbol\Theta})&=& \frac{\partial l(\boldsymbol\Theta)}{\partial c}= n/c - na/{c^2} [\psi (a/c+b) -\psi(a/c)] \\
&&- (b-1) \sum_{i=1}^{n} \frac{(1-t_{i}^{c}) \log(1-t_{i})}{1- (1-t_{i})^{c}}‎‎ ,\\
U_{{\theta}}(\boldsymbol\Theta)&=& \frac{\partial l(\boldsymbol\Theta)}{\partial \theta}= {n}/{\theta} - 1/{\gamma} \sum_{i=1}^{n} (e^{\gamma y_i}-1) + (a-1)/{\gamma} \sum_{i=1}^{n}\frac{t_i (e^{\gamma y_i}-1)}{1-t_i}\\
&&- c(b-1)/{\gamma} \sum_{i=1}^{n} ‎\frac{t_i (1-t_i)^{c-1} (e^{\gamma y_i}-1)}{1-(1-t_i)^c},\\
U_{\gamma}({\boldsymbol\Theta}) &=& \frac{\partial l(\boldsymbol\Theta)}{\partial \gamma}= \sum_{i=1}^{n} y_i + \theta / {\gamma ^2} \sum_{i=1}^{n} (e^{\gamma y_i} - \gamma y_i e^{\gamma y_i} -1) \\
&&+ \theta (a-1) / {\gamma ^2} \sum_{i=1}^{n}\frac{t_i (\gamma y_i e^{\gamma y_i}- e^{\gamma y_i}+1) }{1-t_i}‎\\
&&+ \theta (b-1) c/ {\gamma ^2} \sum_{i=1}^{n}\frac{t_i (1-t_i)^{c-1} (e^{\gamma y_i} - \gamma y_i e^{\gamma y_i} -1) }{1- (1-t_i)^c}‎.
\end{eqnarray*}

We need the observed information matrix for interval estimation and hypotheses tests on the model parameters. The $5‎\times 5$‎ Fisher information matrix, $J=J_n (\boldsymbol\Theta)$, is given by
\[J=- \left[ \begin{array}{ccccc}
J_{aa} & J_{ab} & J_{ac} & J_{a \theta} & J_{a \gamma} \\
J_{ba} & J_{bb} & J_{bc} & J_{b \theta} & J_{b \gamma} \\
J_{ca} & J_{cb} & J_{cc} & J_{c \theta} & J_{c \gamma} \\
J_{\theta a} & J_{\theta b} &J_{\theta c} & J_{\theta \theta} & J_{\theta \gamma} \\
J_{\gamma a} & J_{\gamma b} &J_{\gamma c} & J_{\gamma \theta} & J_{\gamma \gamma}
\end{array} \right],\]
where the expressions for the elements of $J$ are
 \begin{eqnarray*}
 J_{aa } &=& \frac{\partial^{2} l(\boldsymbol\Theta)}{\partial a^{2}}= \frac{n}{c^2} [‎\psi'‎(a/c+b) - ‎\psi'‎(a/c)],\\
 J_{ab}&=&\frac{\partial^{2} l(\boldsymbol\Theta)}{\partial a \partial b } =\frac{n}{c} \psi'‎(a/c+b),\\
 J_{ac}&=&\frac{\partial^{2} l(\boldsymbol\Theta)}{\partial a \partial c } = -\frac{na}{c^3} [‎\psi'(a/c+b) - ‎\psi'(a/c)],\\
 J_{a \theta}&=&\frac{\partial^{2} l(\boldsymbol\Theta)}{\partial a \partial \theta }=\frac{1}{\gamma}\sum_{i=1}^{n}\frac{{t_i} (e^{\gamma y_i}-1)}{1-t_i},\\
 J_{a \gamma}&=& \frac{\partial^{2} l(\boldsymbol\Theta)}{\partial a \partial \gamma} =\frac{\theta}{{\gamma}^2}\sum_{i=1}^{n} ‎\frac{t_i (\gamma y_i e^{\gamma y_i}-e^{\gamma y_i}+1)}{1-t_i},\\
 J_{bb}&=& \frac{\partial^{2} l(\boldsymbol\Theta)}{\partial b^2 } = n [‎\psi'(a/c+b) - ‎\psi'(b)],\\
 J_{bc}&=&\frac{\partial^{2} l(\boldsymbol\Theta)}{\partial b \partial c } = -\frac{na}{c^2} \psi'(a/c+b) - \sum_{i=1}^{n} ‎\frac{(1-t_i)^c \log (1-t_i)}{1-(1-t_i)^c},\\
 J_{b \theta}&=& \frac{\partial^{2} l(\boldsymbol\Theta)}{\partial b \partial \theta } =-\frac{c}{\gamma} \sum_{i=1}^{n} ‎\frac{ t_i (1-t_i)^{c-1} (e^{\gamma y_i}-1)}{1-(1-t_i)^c},\\
 J_{b\gamma}&=& \frac{\partial^{2} l(\boldsymbol\Theta)}{\partial b \partial \gamma } = - \frac{c \theta}{\gamma ^2}\sum_{i=1}^{n} ‎\frac{ t_i (1-t_i)^{c-1} (\gamma y_i e^{\gamma y_i}-e^{\gamma y_i}+1)}{1-(1-t_i)^c},\\
 J_{cc}&=& \frac{\partial^{2} l(\boldsymbol\Theta)}{\partial c^{2}} =- \frac{n}{c^2} + \frac{2na}{c^3} [\psi(a/c+b)  -\psi(a/c)] \\
&& + \frac{na^2}{c^4} [‎\psi'(a/c+b) - ‎\psi'(a/c)]- (b-1) \sum_{i=1}^{n} ‎\frac{(1-t_i)^c (\log (1-t_i))^2}{[1-(1-t_i)^c]^2},\\
 J_{c \theta}&=& \frac{\partial^{2} l(\boldsymbol\Theta)}{\partial c \partial \theta} = -\frac{b-1}{\gamma} \sum_{i=1}^{n} ‎\frac{t_i(1-t_i)^{c-1} (e^{\gamma y_i}-1) [c \log (1-t_i)+1- (1-t_i)^c]}{[1-(1-t_i)^c]^2},\\
J_{c \gamma}&=& \frac{\partial^{2} l(\boldsymbol\Theta)}{\partial c \partial \gamma}= -\frac{\theta(b-1)}{\gamma^2} \sum_{i=1}^{n} ‎\frac{t_i (\gamma y_i e^{\gamma y_i}-e^{\gamma y_i}+1) [c \log (1-t_i)+1- (1-t_i)^c]}{(1-t_i)^{1-c}[1-(1-t_i)^c]^2},\\
 J_{\theta \theta}&=& \frac{\partial^{2} l(\boldsymbol\Theta)}{\partial \theta^{2}} = -\frac{n}{\theta ^2} - \frac{a-1}{\gamma ^2} \sum_{i=1}^{n}
  ‎\frac{{t_i} (e^{\gamma y_i}-1)^2}{(1-t_i)^2} -\frac{c (b-1)}{\gamma ^2}\\
 && \times \sum_{i=1}^{n} ‎\frac{{t_i} (1-t_i)^{c-2} (e^{\gamma y_i}-1)^2 ‎\lbrace‎ { c t_i + (1-t_i)^c - 1} \rbrace‎‎}{(1-(1-t_i)^c)^2},\\
 J_{\theta \gamma}&=& \frac{\partial^{2} l(\boldsymbol\Theta)}{\partial \theta \partial \gamma}= ‎\frac{1}{\gamma ^2} \sum_{i=1}^{n} (e^{\gamma y_i} - \gamma y_i e^{\gamma y_i} -1)-‎\dfrac{c(b-1)}{\gamma ^3} \sum_{i=1}^{n} ‎\frac{t_i (e^{\gamma y_i} - \gamma y_i e^{\gamma y_i} -1)}{(1-t_i)^{2-c}(1-(1-t_i)^c)^2}\\
 && \times ‎\left[ (1-t_i)^c (\theta e^{\gamma y_i}+ t_i \gamma - \gamma - \theta) + c \theta t_i (e^{\gamma y_i}-1) + \gamma (1-t_i) + \theta (1-e^{\gamma y_i}) ‎\right]\\
 && + ‎\dfrac{(a-1)}{\gamma ^3} \sum_{i=1}^{n} ‎\dfrac{t_i (e^{\gamma y_i} - \gamma y_i e^{\gamma y_i} -1) (\theta e^{\gamma y_i}+ \gamma t_i - \gamma - \theta)}{(1-t_i)^2}‎‎‎‎‎‎,\\
 J_{\gamma\gamma}&=& \frac{\partial^{2} l(\boldsymbol\Theta)}{\partial \gamma^{2}}= ‎\dfrac{2 \theta}{\gamma ^3}‎ \sum_{i=1}^{n} (\gamma y_i e^{\gamma y_i} - e^{\gamma y_i} - \gamma ^2 y_{i}^2e^{\gamma y_i}/2 +1)\\
 && - ‎\dfrac{c(b-1)\theta ^2}{\gamma ^4} \sum_{i=1}^{n} ‎\dfrac{t_{i}^2 (1-t_i)^{c-2}(\gamma y_i e^{\gamma y_i}-e^{\gamma y_i}+1)^2 (2c(1-t_i)^c+(1-t_i)^c-c-1)}{(1-(1-t_i)^c)^2}\\
 && + \dfrac{c(b-1)\theta}{\gamma ^4}‎‎ \sum_{i=1}^{n} \dfrac{t_{i} (1-t_i)^{c-1}}{1-(1-t_i)^c} [ -\gamma ^3 y_{i}^2 e^{\gamma y_i} + 2 \gamma ^2 y_{i} e^{\gamma y_i}- 2 \gamma e^{\gamma y_i} + 2 \gamma + \theta \gamma ^2 y_{i}^2 e^{2 \gamma y_i}\\
 && + \theta (e^{\gamma y_i}-1)^2 - 2 \theta \gamma y_{i} e^{\gamma y_i} (e^{\gamma y_i}-1) ]\\
 && - ‎‎ \dfrac{(a-1)\theta}{\gamma ^4} \sum_{i=1}^{n} ‎\dfrac{t_i}{1-t_i} ‎\left[ -\gamma ^3 y_{i}^2 e^{\gamma y_i} + 2 \gamma ^2 y_{i} e^{\gamma y_i}-2 \gamma (e^{\gamma y_i}-1)+\theta (\gamma y_{i} e^{\gamma y_i}- e^{\gamma y_i}+1)^2 ‎\right]\\
 && - \dfrac{(a-1)\theta ^2}{\gamma ^4} \sum_{i=1}^{n}‎\dfrac{t_{i}^2 (\gamma y_{i} e^{\gamma y_i}- e^{\gamma y_i}+1)^2}{(1-t_i)^2}.
\end{eqnarray*}
Under conditions that are fulfilled for parameters in the interior of the parameter space but not on the boundary, asymptotically
$$\sqrt{n}(\hat{\boldsymbol\Theta}-\boldsymbol\Theta)\sim N_{5}(0,I(\boldsymbol\Theta)^{-1}),$$
where $I({\boldsymbol\Theta})$ is the expected information matrix. This asymptotic behavior is valid if $I(\boldsymbol\Theta)$ replaced by $J_n(\hat{\boldsymbol\Theta })$ , i.e., the observed information matrix evaluated at $\hat{\boldsymbol\Theta}$ \cite{co-hi-74}.

\section{Application of McG to two real data sets}
\label{sec.exa}
In this section, two real data sets are considered to illustrate that the McG model can be a good lifetime distribution comparing with main three submodels; BG, KumG and McE distributions. In both examples, we obtain the MLE and their corresponding standard errors (in parentheses) of the model parameters. The model selection is carried out using minus of log-likelihood function ($-\log  (L)$), Kolmogorov-Smirnov (K-S) statistic with its p-value, Akaike information criterion (AIC), Akaike information criterion corrected (AICC), Bayesian information criterion (BIC) and likelihood ratio test (LRT) with its p-value. Furthermore, we plot the histogram for each data set and the estimated pdf of the four models. Moreover, the plots of empirical cdf of the data sets and estimated cdf of four models are displayed.

\begin{example}
The data set have been obtained from \cite{Aarset-87} and represents the lifetimes of 50 devices. Also, it is analyzed by \cite{el-al-al-13} and \cite{ja-ta-al-14}. The results which are given in Table 1 indicate that the McE model is not suitable for this data set based on K-S statistic. The McG model has the lowest $-\log  (L)$, AIC and AICC values among all fitted models, but the BG model has the lowest BIC value among all other fitted models. A comparison of the proposed distribution with some of its submodels using p-value of LRT shows that the McG model yields a better fit
than the other three distributions to this real data set. It is also clear from Figure \ref{plot.ex1} that the McG distribution provides a better fit and therefore be one of the best models for this data set.
\end{example}

\begin{example}
The data set represents the strengths of 1.5 cm glass fibers, measured at the National Physical
Laboratory, England. Unfortunately, the units of measurement are not given in the paper. It is obtained from
\cite{sm-na-87}
and also analyzed by \cite{ba-sa-co-10}. The K-S statistic indicates that all fitted models are good candidates for this data set, but based on all other criteria: $-\log  (L)$, AIC, AICC, BIC and p-value of LRT, we can infer that the best model is the McG distribution. Similar  results are concluded from Figure \ref{plot.ex2}.
\end{example}

\begin{table}
\begin{center}
\caption{MLEs of the model parameters for the data of lifetimes of 50 devices, the corresponding SEs and the K-S, AIC, AICC, BIC and LRT statistics.}

\begin{tabular}{|l|c|c|c|c|} \hline
 & \multicolumn{4}{|c|}{Distribution} \\ \hline
 & BG & KumG & McE & McG \\ \hline
$\hat{a}$ & 0.2158 & 0.2374 & 0.8643 & 0.2619 \\
(s.e.) & (0.0392) & (0.0871) & (0.4621) & (0.0656) \\ \hline

$\hat{b}$ & 0.2467 & 0.6063 & 0.0706 & 0.0752 \\
(s.e.) & (0.0448) & (0.3780) & (0.0927) & (0.1029) \\ \hline

$\hat{c}$ & --- & 0.2374 & 2.7127 & 3.7652 \\
(s.e.) & --- & (0.0871) & (3.7438) & (0.9946) \\ \hline

$\hat{\theta }$ & 0.0003 & 0.0003 & 0.2826 & 0.0012 \\
(s.e.) & (0.0001) & (0.0001) & (0.3727) & (0.0001) \\ \hline

$\hat{\gamma }$ & 0.0882 & 0.0766 & --- & 0.0875 \\
(s.e.) & (0.0030) & (0.0282) & --- & (0.0001) \\ \hline\hline

$-\log  (L)$ & 220.6714 & 221.9666 & 237.8158 & 219.0041 \\
K-S & 0.1322 & 0.1367 & 0.1916 & 0.1216 \\
p-value (K-S) & 0.3456 & 0.3072 & 0.0507 & 0.4509 \\
AIC & 449.3437 & 451.9331 & 483.6316 & 448.0081 \\
AICC & 450.2326 & 452.8220 & 484.5205 & 449.3718 \\
BIC & 456.9918 & 459.5812 & 491.2797 & 457.5682 \\
LRT & 3.3356 & 5.9249 & 37.6233 & --- \\
p-value (LRT) & 0.06779 & 0.0149 & 0.0000 & --- \\ \hline
\end{tabular}

\caption{MLEs of the model parameters for the strengths of 1.5 cm glass fibers data, the corresponding SEs and the K-S, AIC, AICC, BIC and LRT statistics.}

\begin{tabular}{|l|c|c|c|c|} \hline
 & \multicolumn{4}{|c|}{Distribution} \\ \hline
 & BG & KumG & McE & McG \\ \hline
$\hat{a}$ & 1.6907 & 1.8946 & 9.3276 & 0.7940 \\
(s.e.) & (0.8664) & (1.2350) & (2.6891) & (0.2355) \\ \hline

$\hat{b}$ & 27.7434 & 4.2814 & 93.4655 & 0.1248 \\
(s.e.) & (87.8726) & (15.7612) & (109.2042) & (0.1786) \\ \hline

$\hat{c}$ & --- & 1.8946 & 22.6124 & 192.1704 \\
(s.e.) & --- & (1.2350) & (14.0674) & (307.3698) \\ \hline

$\hat{\theta }$ & 0.0020 & 0.0309 & 0.9227 & 0.0009 \\
(s.e.) & (0.0632) & (0.0509) & (0.3443) & (0.0001) \\ \hline

$\hat{\gamma }$ & 2.7156 & 2.3019 & --- & 5.2013 \\
(s.e.) & (0.8448) & (1.6517) & --- & (0.4018) \\ \hline \hline

$-\log (L)$ & 14.2158 & 14.0305 & 15.5995 & 11.4208 \\
K-S & 0.1324 & 0.1313 & 0.1466 & 0.1159 \\

p-value (K-S) & 0.2186 & 0.2271 & 0.1335 & 0.3651 \\
AIC & 36.4317 & 36.0610 & 37.2569 & 32.8417 \\
AICC & 37.1213 & 36.7506 & 37.9466 & 33.8943 \\
BIC & 45.0042 & 44.6335 & 45.8295 & 43.5573 \\
LRT & 5.5900 & 5.2193 & 8.3572 & --- \\
p-value (LRT) & 0.0181 & 0.0223 & 0.0038 & --- \\ \hline
\end{tabular}

\end{center}
\end{table}

\begin{figure}
\centering
\includegraphics[scale=0.53]{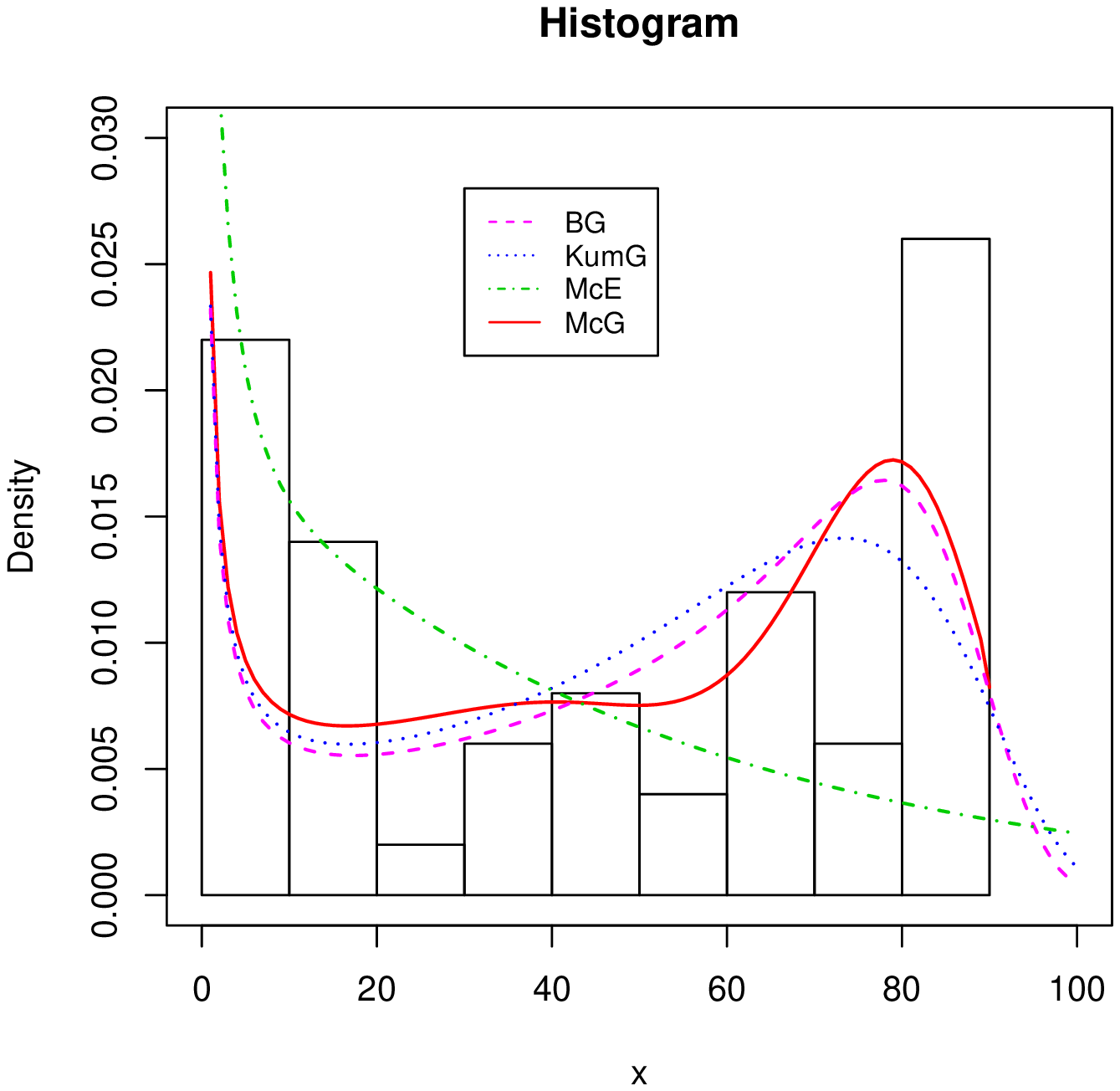}
\includegraphics[scale=0.53]{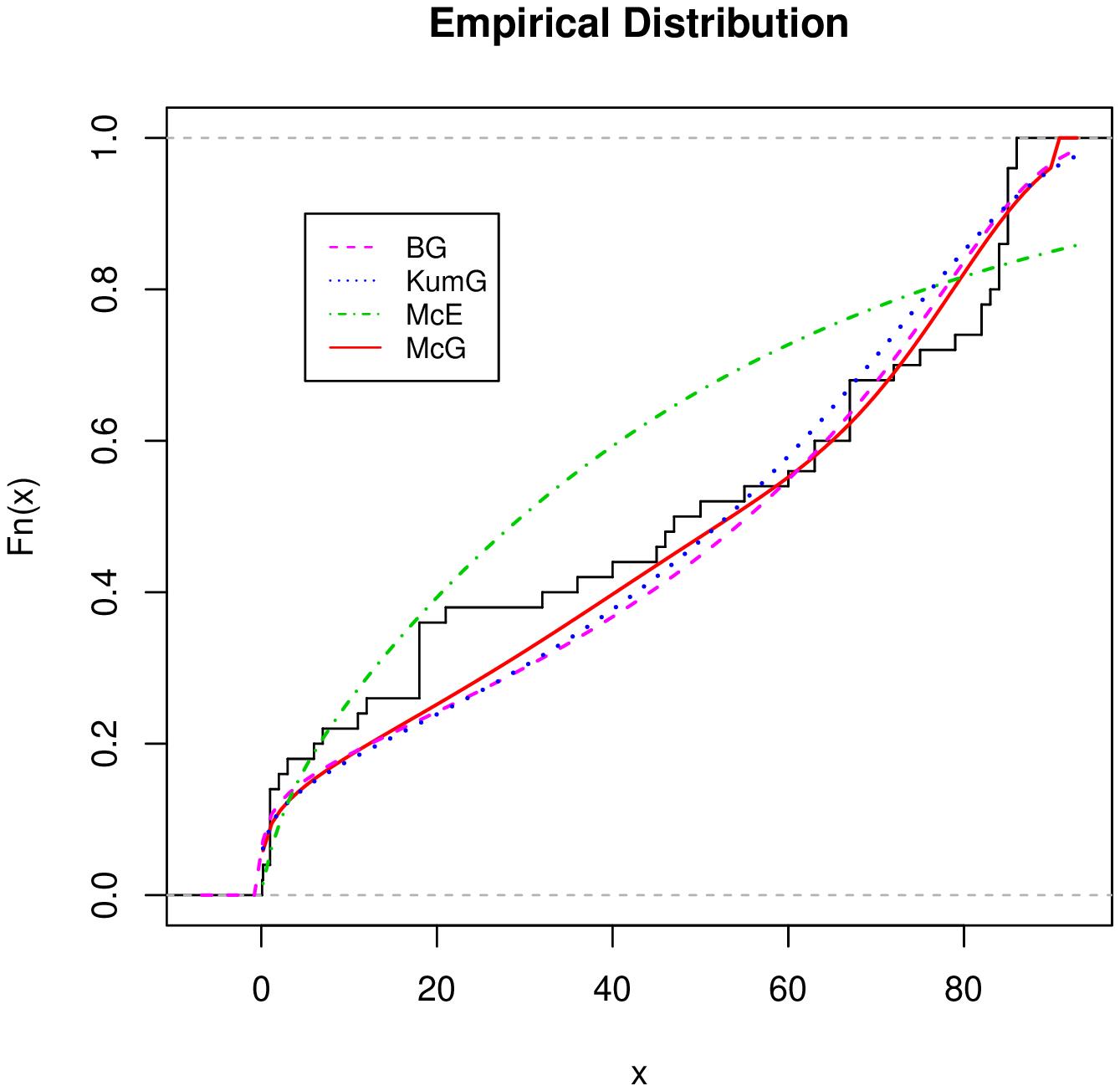}
\caption[]{Plots of the estimated pdfs and cdfs BG, KumG, McE and McG models using data of lifetimes of 50 devices. }\label{plot.ex1}

\includegraphics[scale=0.55]{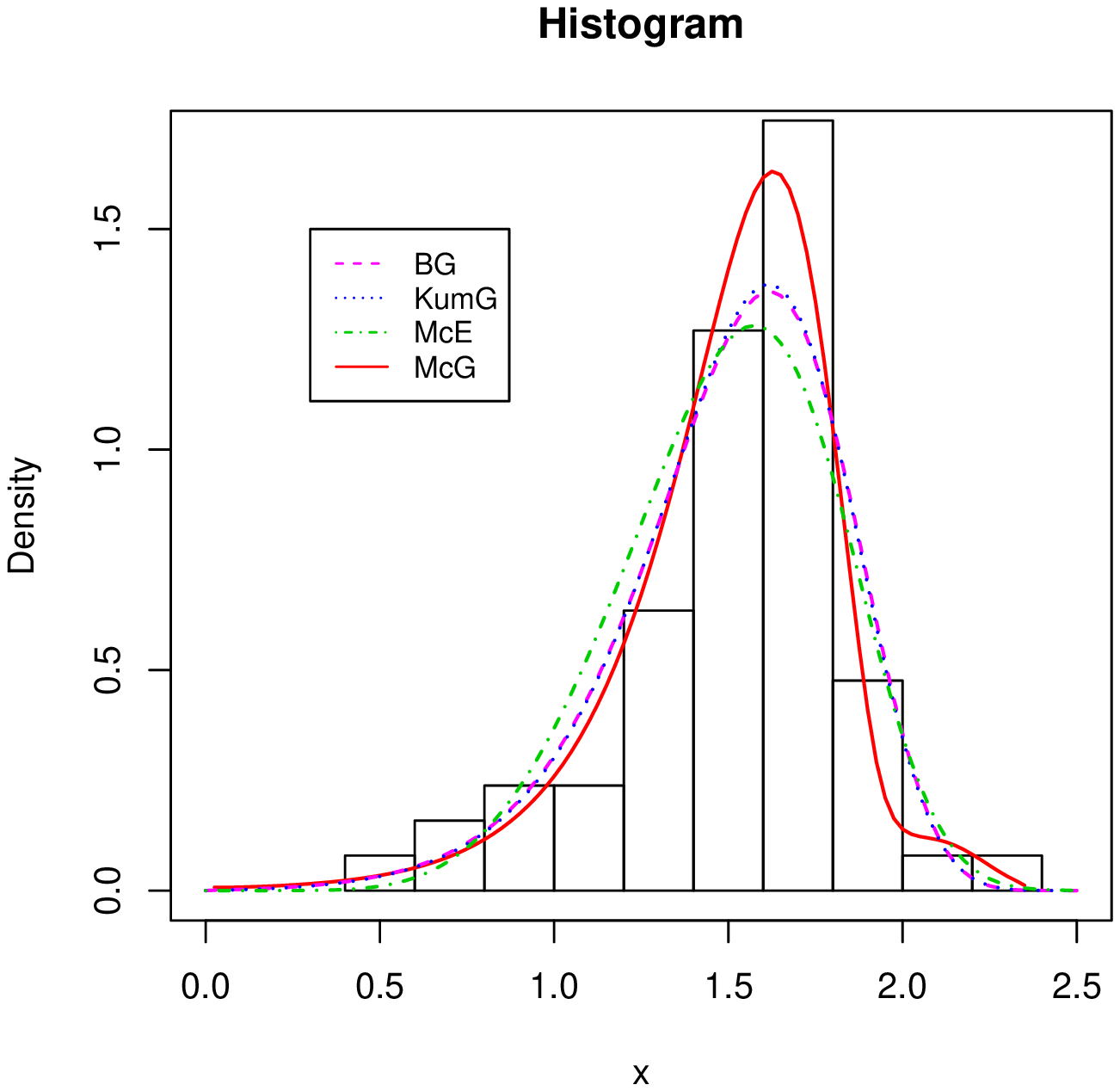}
\includegraphics[scale=0.55]{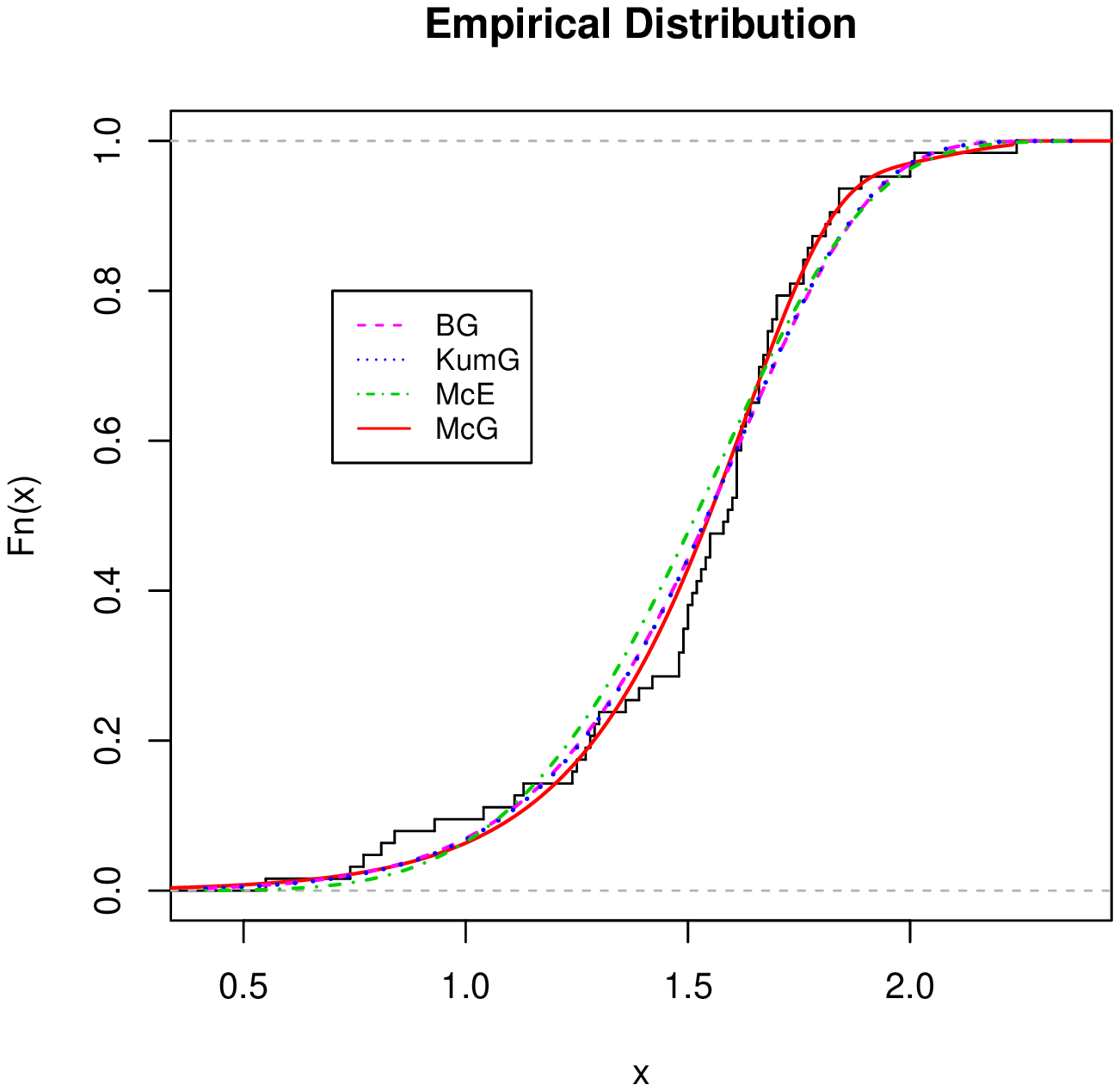}
\caption[]{Plots of the estimated pdfs and cdfs BG, KumG, McE and McG models using the strengths of 1.5 cm glass fibers data.}\label{plot.ex2}
\end{figure}

\newpage
\bibliographystyle{apa}

\end{document}